\def\identity{\leavevmode\hbox{\small1\kern-3.8pt\normalsize1}}
\newtheorem{theorem}{Theorem}
\newtheorem{lemma}{Lemma}
\newcommand{\ket}[1]{\left | #1 \right\rangle}
\newcommand{\bra}[1]{\left \langle #1 \right |}
\newcommand{\Tr}{\mathrm{Tr}}
\newcommand{\proj}[1]{\ket{#1}\bra{#1}}
\renewcommand{\epsilon}{\varepsilon}
\begin{document}

\title{Correlations between outcomes of random observables}

\author{Minh Cong Tran}
\affiliation{School of Physical and Mathematical Sciences, Nanyang Technological University, Singapore}

\author{Borivoje Daki\'c}
\affiliation{Faculty of Physics, University of Vienna, Boltzmanngasse 5, A-1090 Vienna, Austria}
\affiliation{Institute for Quantum Optics and Quantum Information, Austrian Academy of Sciences, Boltzmanngasse 3, A-1090 Vienna, Austria}

\author{Wies{\l}aw Laskowski}
\affiliation{Institute of Theoretical Physics and Astrophysics, University of Gda\'nsk, Gda\'nsk, Poland}

\author{Tomasz Paterek}
\affiliation{School of Physical and Mathematical Sciences, Nanyang Technological University, Singapore}
\affiliation{Centre for Quantum Technologies, National University of Singapore, Singapore}
\affiliation{MajuLab, CNRS-UNS-NUS-NTU International Joint Research Unit, UMI 3654, Singapore}

\begin{abstract}
We recently showed that multipartite correlations between outcomes of random observables detect quantum entanglement in all pure and some mixed states.
In this follow up article we further develop this approach, derive maximal amount of such correlations and show that they are not monotonous under local operations and classical communication.
Nevertheless, we demonstrate their usefulness in entanglement detection with a single random observable per party.
Finally we study convex-roof extension of the correlations and provide a closed-form necessary and sufficient condition for entanglement in rank-$2$ mixed states and a witness in general.
\end{abstract}

\pacs{03.65.Ud}

\maketitle

The Bell singlet state is a paradigmatic example of an entangled state.
This is usually demonstrated by noting that the entropy of the pair of particles is smaller than the entropy of each particle, a possibility forbidden in classical objects.
At the same time the singlet state is famous for its correlations.
Indeed, two observers measuring the same spin direction will always find their outcomes opposite.
This holds independently of a particular measurement direction, in agreement with the total spin being zero.
Furthermore, even for spin directions that differ quantum mechanics predicts high probability of opposite outcomes.
One might therefore ask if correlations between randomly chosen observables reveal entanglement.
We have recently shown that such ``random correlations'' are indeed the feature of entangled pure states~\cite{PhysRevA.92.050301}:
A pure $N$-particle state is entangled if and only if the squared $N$-partite correlation functions averaged over uniform choices of local observables exceed certain bound.

In this follow up paper we extend our approach in several ways.
In Sec.~\ref{SEC_PURE} we focus on pure states in arbitrary dimensions and derive explicitly equivalence between entanglement and the random correlations in general.
We then study maximal amount of random correlations in a pure state and find that it is achieved (non uniquely) by the Greenberger-Horne-Zeilinger (GHZ) states of odd number of qubits.
(We conjecture that GHZ states give rise to maximal random correlations in general).
It turns out that random correlations of the 2d cluster states scale intermediately as expected from entanglement of resources for universal quantum computing~\cite{PhysRevLett.102.190501,PhysRevLett.102.190502}.
All this suggests that random correlations might be a proper entanglement monotone.
We show that this is true for bipartite systems and provide explicit counter-examples for a five-qubit system.
Nevertheless, the random correlations are helpful as entanglement witnesses which we demonstrate on a vivid example where entanglement is detected with \emph{one} random observable per party.

In Sec.~\ref{SEC_MIXED} we move to mixed states and consider convex roof extension of random correlations.
We prove necessary and sufficient condition for entanglement in rank-$2$ states and present entanglement witness for general states.
The witness is illustrated on an explicit example where it detects all entangled states of certain family.

\section{Pure states}
\label{SEC_PURE}

Let us briefly summarize previous results regarding random correlations and a related notion of the length of correlations
(not to be confused with the length in physical space).
We start with two-level systems, qubits.
Any $N$-qubit density matrix can be represented in terms of Pauli matrices as
\begin{align}
	\rho = \frac{1}{2^N}\sum_{\mu_1,\dots,\mu_N=0}^3 T_{\mu_1\dots\mu_N}\sigma_{\mu_1}\otimes\dots\otimes\sigma_{\mu_N},
\end{align}
where $\sigma_1, \sigma_2,\sigma_3$ are the Pauli matrices and $\sigma_0$ is the identity. 
The real coefficients $ T_{\mu_1\dots\mu_N}\in [-1,1]$ form an extended correlation tensor
which is just an alternative to the density matrix representation of a quantum state.
If each party chooses to measure their qubit along a local direction $\vec u_i$ then expectation of the product of their measurement outcomes, so called correlation function, is given by 
\begin{align}
	E(\vec u_1, \dots , \vec u_N) = \sum_{j_1, \dots, j_N = 1}^3 T_{j_1\dots j_N}  \: (\vec u_1)_{j_1} \dots (\vec u_N)_{j_N},
\label{E_ROT_INV}
\end{align}
where $(\vec u_i)_{j_i}$ is the $j_i$th component of vector $\vec u_i$.
We define \emph{random correlations} as the expectation value of squared correlation functions averaged over uniform choices of settings for each individual observer
\begin{align}
	\mathcal{R} \equiv \frac{1}{(4\pi)^N} \int d \vec u_1 \dots \int d \vec u_N \, \,  E^2(\vec u_1, \dots, \vec u_N), \label{DEF_TINF}
\end{align}
where $d \vec u_n = \sin \theta_n d \theta_n d \phi_n$ is the usual measure on the unit sphere.
To estimate $\mathcal{R}$, it would seem that we have to take into account all local directions but in fact it is sufficient to consider only a set of orthogonal axes for each party \cite{PhysRevA.92.050301}:
\begin{align}
	\mathcal{R}=\frac{1}{3^N} \sum_{\vec u_1,\dots,\vec u_N = \vec{x},\vec{y},\vec{z}} E^2(\vec u_1,\dots,\vec u_N)\equiv\frac{ \mathcal{C}}{3^N}, \label{DEF_C}
\end{align}
where  \emph{length of correlations} $\mathcal{C}$ is defined as the sum of squared correlations measured along a complete set of orthogonal local axes.
Note that no common reference frame is required, each observer is allowed a different Cartesian coordinate system.
The name ``length of correlations'' refers to the fact that $\mathcal{C}$ is a squared norm of the correlation tensor (with components having solely correlation functions between all $N$ qubits).
It has been shown in Refs.~\cite{QuantInfComp.8.773,PhysRevA.77.062334,PhysRevA.80.042302,PhysRevA.92.050301} that $\mathcal{C}>1$ if and only if the system is in a pure entangled state.
In Appendix~\ref{APP_SCHMIDT} we present a simple alternative proof for pure systems of two and three qubits that follows directly from the Schmidt decomposition.
This line of reasoning does not extend to higher number of qubits because the restrictions brought forward by the Schmidt decomposition do not engage sufficient number of subsystems.


\subsection{Higher dimensions}

We now extend our criteria for entanglement to higher dimensions, i.e. qudits of dimension $d$. 
The final result is already presented in Ref.~\cite{PhysRevA.92.050301}, but we would like to clarify which quantities exactly we consider and discuss explicitly some subtleties.
The step-by-step derivation below is presented for the first time.

We replace the Pauli matrices with a complete orthonormal basis consisting of identity and $d^2-1$ traceless operators $\sigma_j$ such that for all $j,k = 1,\dots,d^2-1$:
\begin{align}
	&\Tr(\sigma_j) = 0,	\label{COND_BASIS1}\\	
	&\Tr(\sigma_j \sigma^\dag_k) = d \, \delta_{jk}. \label{COND_BASIS2}
\end{align}
Various concrete realisations can be taken here, e.g. generalised Gell-Mann operators (hermitian basis) or the Weyl-Heisenberg operators (unitary basis).
For completeness we write them in Appendix~\ref{APP_BASES}.

An arbitatry state $\rho$ of $N$ qudits can be decomposed using these operators in a way similar to the Bloch representation:
\begin{align}
	\rho=\frac{1}{d^N}\sum_{\mu_1,\mu_2,\dots,\mu_N=0}^{d^2-1}T_{\mu_1\mu_2\dots\mu_N}\sigma_{\mu_1}\otimes\dots\otimes\sigma_{\mu_N},
\end{align}
with $\sigma_0 = \mathbb{I}$ being identity and $\sigma_j$ the operators defined above. 
As before, the coefficients are:
\begin{align}
	 T_{\mu_1\mu_2\dots\mu_N}=\Tr{(\rho \, \sigma_{\mu_1}^\dag\otimes\dots\otimes\sigma_{\mu_N}^\dag}). \label{DEF_CORRd}
\end{align}
However, these coefficients are in general complex valued as $\sigma$'s are not required to be hermitian.
We therefore define the \emph{length of correlations} with the help of absolute value:
\begin{align}
	\mathcal{C} \equiv \sum_{j_1,j_2,\dots,j_N=1}^{d^2-1}|T_{j_1j_2\dots j_N}|^2.
	\label{C_D}
\end{align}
As proven in Appendix \ref{APP_CInv}, the length of correlations $\mathcal{C}$ is invariant under the choice of local basis as long as \eqref{COND_BASIS1} and \eqref{COND_BASIS2} are satisfied.
Note that this is more general than invariance under local unitary transformations.
For example, the Gell-Mann basis cannot be obtained from the Weyl-Heisenberg basis by local unitaries because the corresponding operators have different eigenvalues.
Nevertheless, the length of correlations is the same in both bases provided they are suitably normalized.

Similar to the case of qubits, the length of correlations can also be used to identify entanglement in pure states.
\begin{theorem} 
\label{TH_1}
Pure state $\ket{\Psi}$ is entangled if and only if
\begin{align}
	\mathcal{C} > (d-1)^N.
\end{align}
\end{theorem}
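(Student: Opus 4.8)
The plan is to establish the statement in the equivalent form ``$\mathcal{C}\geq(d-1)^N$ for every pure state, with equality iff $\ket{\Psi}$ is fully product''. The first step is to rewrite $\mathcal{C}$ as an expectation value on two copies of the state. From \eqref{COND_BASIS1}--\eqref{COND_BASIS2} one has the reconstruction formula $X=\tfrac1d\sum_{j}\Tr(X\sigma_j^{\dag})\sigma_j$, and hence for a single qudit $\sum_{j=0}^{d^2-1}\sigma_j^{\dag}\otimes\sigma_j=d\,V$, where $V$ swaps the two copies (both sides return $d\,\Tr(AB)$ when traced against $A\otimes B$); subtracting the $j=0$ term gives $\sum_{j=1}^{d^2-1}\sigma_j^{\dag}\otimes\sigma_j=d\,V-\mathbb{I}$. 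Substituting this into \eqref{C_D} and regrouping the two tensor factors party by party yields, for an arbitrary $N$-qudit state,
\begin{align}
	\mathcal{C}=\Tr\Big[(\rho\otimes\rho)\,\bigotimes_{i=1}^{N}\big(d\,V_i-\mathbb{I}\big)\Big],
\end{align}
with $V_i$ the swap of the two copies of party $i$. For $\rho$ a product of pure one-qudit states this gives at once $\mathcal{C}=\prod_i\big(d\,\Tr(\rho_i^2)-1\big)=(d-1)^N$, so separable pure states saturate the bound.

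For a general pure state $\rho=\proj{\Psi}$ the identity reads $\mathcal{C}=\bra{\Psi}^{\otimes2}\bigotimes_i(d\,V_i-\mathbb{I})\ket{\Psi}^{\otimes2}$, and the remaining step is a spectral estimate. The operators $V_i$ pairwise commute and all commute with the global swap $V=\prod_iV_i$, of which $\ket{\Psi}^{\otimes2}$ is a $+1$ eigenvector. On a common eigenvector with $V_i\mapsto\epsilon_i\in\{\pm1\}$, the operator $\bigotimes_i(d\,V_i-\mathbb{I})$ has eigenvalue $\prod_i(d\epsilon_i-1)$; the condition $\prod_i\epsilon_i=+1$ forces an even number $m$ of the $\epsilon_i$ to be $-1$, so this eigenvalue equals $(d-1)^{N-m}(d+1)^{m}\geq(d-1)^N$, with equality only at $m=0$. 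Hence $\bigotimes_i(d\,V_i-\mathbb{I})$ restricted to the symmetric subspace dominates $(d-1)^N\,\mathbb{I}$, and since $\ket{\Psi}^{\otimes2}$ lies in that subspace, $\mathcal{C}\geq(d-1)^N$.

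Finally, equality would force $\ket{\Psi}^{\otimes2}$ into the $(d-1)^N$ eigenspace, i.e.\ into the $+1$ eigenspace of \emph{every} $V_i$. Because $V_i$ is a reflection and $\bra{\Psi}^{\otimes2}V_i\ket{\Psi}^{\otimes2}=\Tr(\rho_i^2)$ (with $\rho_i$ the one-qudit marginal), this is equivalent to $\Tr(\rho_i^2)=1$ for all $i$, that is, to all single-qudit marginals being pure; iterating the Schmidt decomposition one party at a time, this holds iff $\ket{\Psi}$ is fully product. Combining the three steps, $\mathcal{C}=(d-1)^N$ exactly for separable $\ket{\Psi}$ and $\mathcal{C}>(d-1)^N$ exactly for entangled $\ket{\Psi}$.

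I expect the only real obstacle to be the bookkeeping in the first step: getting the identity right requires tracking the reordering between the ``$1,\dots,N,1',\dots,N'$'' ordering of the doubled Hilbert space and the per-party ordering used in $\bigotimes_i(d\,V_i-\mathbb{I})$, and checking $\sum_j\sigma_j^{\dag}\otimes\sigma_j=d\,V$ for a general (not necessarily Hermitian) normalised basis so the result is genuinely basis-independent. Once that is settled the spectral part is elementary; the one conceptual point worth emphasising is that passing to the symmetric subspace --- where both copies carry the same state --- is precisely what removes the eigenvalues of magnitude below $(d-1)^N$, and thereby makes both the inequality and its equality condition tight.
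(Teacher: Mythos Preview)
Your argument is correct and follows the same strategy as the paper's proof in Appendix~\ref{APP_TH_1}: write $\mathcal{C}$ as the expectation of $\bigotimes_i S_i$ on $\ket{\Psi}^{\otimes 2}$, observe that the single-party factor has eigenvalues $d-1$ (symmetric) and $-(d+1)$ (antisymmetric), and use the global exchange symmetry of $\ket{\Psi}^{\otimes 2}$ to force an even number of antisymmetric factors, whence the minimal relevant eigenvalue is $(d-1)^N$. Your execution is somewhat cleaner on two points: you identify $\sum_{j\ge1}\sigma_j^{\dag}\otimes\sigma_j=dV-\mathbb{I}$ directly from completeness, which gives the spectrum of $S$ basis-independently (the paper instead computes eigenvectors explicitly in the Gell-Mann basis); and in the equality case you use $\bra{\Psi}^{\otimes2}V_i\ket{\Psi}^{\otimes2}=\Tr(\rho_i^2)$ to conclude that every one-qudit marginal is pure, whereas the paper reaches the same conclusion through a coordinate comparison of $\ket{\Psi}\otimes\ket{\Psi}$ before and after swapping a single party. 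Both routes arrive at the same factorisation of $\ket{\Psi}$ one party at a time.
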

\begin{proof}
The proof follows the same lines as for qubits and is presented in Appendix \ref{APP_TH_1}.
\end{proof}
It should also be clear how to extend this theorem to cover subsystems of different dimensions, e.g. $2 \times 3$.

We will now define random correlations for qudits and show that they are proportional to $\mathcal{C}$.
The generalization is obtained by requesting that random correlations are the average of squared correlations over uniform choices of unitary matrices $U_n$ for each observer:
\begin{align}
	\mathcal{R} \equiv \frac{1}{W^N} \int d U_1 \dots \int d U_N \, \,  |E(U_1, \dots, U_N)|^2, \label{DEF_Rd}
\end{align}
where $W$ is a normalization constant, i.e. $\int d U_n = W$, and the unitary-dependent correlation function reads:
\begin{align}
	E(U_1,\dots,U_N) = \Tr\left[\rho \, \bigotimes_{n=1}^N U_n^\dag \sigma_1^\dag U_n \right].
\end{align}
Here we have chosen exemplary operator $\sigma_1$ as an initial observable of the averaging.
In Appendix~\ref{APP_INITIAL} we prove that $\mathcal{R}$ is independent of the choice of this initial operator as long as it is traceless and normalized.
With these definitions we link random correlations and entanglement as follows.
\begin{theorem}
For any pure state of $N$ qudits, each of dimension $d$,
\begin{align}
	\mathcal{R} = \frac{\mathcal{C}}{(d^2-1)^N}.
\end{align}
\begin{proof}
Note that the length of correlations $\mathcal{C}$ is a sum of $(d^2-1)^N$ squared correlation functions, each of which is equal to $\mathcal{R}$ after averaging over local unitary transformations (see Appendix~\ref{APP_INITIAL}). 
Since $\mathcal{C}$ is invariant under such transformations, overall these averages still sum up to $\mathcal{C}$.
Therefore $\mathcal{C}$ is just a multiple of $\mathcal{R}$.
\end{proof}
\end{theorem}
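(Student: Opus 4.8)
The plan is to prove the sharper statement that each of the $(d^2-1)^N$ squared correlation functions making up the length of correlations \eqref{C_D} individually averages, under Haar-random local unitaries, to the same number $\mathcal{R}$; the identity $\mathcal{C}=(d^2-1)^N\mathcal{R}$ then follows by summing. It is worth noting in advance that purity of $\ket{\Psi}$ will play no role in this argument, so the same proof yields the identity for an arbitrary state $\rho$.

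Concretely I would proceed in two steps. First, using the definition \eqref{DEF_CORRd} of the correlation tensor, write
\[
	\mathcal{C}=\sum_{j_1,\dots,j_N=1}^{d^2-1}\left|\Tr\!\left[\rho\,\sigma_{j_1}^\dag\otimes\dots\otimes\sigma_{j_N}^\dag\right]\right|^2 .
\]
For any tuple of local unitaries $(U_1,\dots,U_N)$, replacing each $\sigma_{j_n}$ by $U_n^\dag\sigma_{j_n}U_n$ is a local change of the operator basis that still obeys \eqref{COND_BASIS1} and \eqref{COND_BASIS2}; hence, by the basis-invariance of $\mathcal{C}$ (Appendix~\ref{APP_CInv}),
\[
	\mathcal{C}=\sum_{j_1,\dots,j_N=1}^{d^2-1}\left|\Tr\!\left[\rho\bigotimes_{n=1}^N U_n^\dag\sigma_{j_n}^\dag U_n\right]\right|^2
\]
for \emph{every} choice of the $U_n$. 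Second, I average this identity over Haar measure on each $U_n$ and normalize by $W^N$. The left-hand side does not depend on the $U_n$ and so remains $\mathcal{C}$. Each summand on the right-hand side is exactly the Haar average of $|E|^2$ built from the per-party initial observables $\sigma_{j_1},\dots,\sigma_{j_N}$, all of which are traceless and normalized by \eqref{COND_BASIS1}--\eqref{COND_BASIS2}; by the independence of $\mathcal{R}$ from the choice of (traceless, normalized) initial operator, proved in Appendix~\ref{APP_INITIAL}, every such average equals the one built from $\sigma_1,\dots,\sigma_1$, i.e. $\mathcal{R}$ as in \eqref{DEF_Rd}. Thus all $(d^2-1)^N$ terms equal $\mathcal{R}$ and $\mathcal{C}=(d^2-1)^N\mathcal{R}$.

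The substantive point is the averaging step, which I also expect to be the main obstacle: showing that the Haar integral of $|E|^2$ is insensitive to which traceless unit-normalized operator is rotated. The transparent way to see this is via the second-moment (Weingarten) calculus on $U(d)$: for one party, $\frac{1}{W}\int dU\,(U^\dag\sigma^\dag U)\otimes\overline{(U^\dag\sigma^\dag U)}$ is a fixed linear combination of the identity and the swap operator whose coefficients depend on $\sigma$ only through $\Tr\sigma$ and $\Tr(\sigma\sigma^\dag)$; taking the $N$-fold tensor product and contracting twice against $\rho$ (once for $E$, once for $\overline{E}$) collapses, after using $\Tr\sigma_{j_n}=0$ and $\Tr(\sigma_{j_n}\sigma_{j_n}^\dag)=d$, to a multiple of $\sum_{j_1,\dots,j_N}|T_{j_1\dots j_N}|^2=\mathcal{C}$, and comparison with the first argument (or a short check) fixes the multiple at $(d^2-1)^{-N}$. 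If one prefers to avoid Weingarten calculus altogether, this whole step can be replaced by the computation of Appendix~\ref{APP_INITIAL}, which is precisely why that appendix is cited above.
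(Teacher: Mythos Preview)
Your proposal is correct and follows essentially the same route as the paper: use local-unitary (basis) invariance of $\mathcal{C}$ so that averaging leaves the left side unchanged, and invoke Appendix~\ref{APP_INITIAL} so that every one of the $(d^2-1)^N$ averaged summands equals $\mathcal{R}$. Your write-up is simply a more explicit version of the paper's three-line argument, with the added (and correct) remark that purity plays no role and an optional Weingarten derivation of the key averaging fact.
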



\subsection{Maximal random correlations}

Here we examine states that maximize $\mathcal{C}$ (and therefore $\mathcal{R}$).
We focus on multiple qubits.
Let us begin by proving the upper bound on the length of correlations.
\begin{theorem}\label{TH_MAXENT}
Every pure state of odd number of qubits $N$ satisfies
\begin{align}
	\mathcal{C}\leq 2^{N-1}. \qquad \qquad (N \textrm{ is odd})
	\label{EQN_MAXCORR}
\end{align} 
\end{theorem}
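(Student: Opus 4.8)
First I would pin down the weight-$N$ sector of the correlation tensor \emph{exactly}, by combining two global sum rules on the ``sector weights'' and then reading off the bound. For an index string $\mu=(\mu_1,\dots,\mu_N)$ let $w(\mu)$ count the entries with $\mu_n\neq0$, and put $A_k\equiv\sum_{w(\mu)=k}T_\mu^2$, so that $\mathcal{C}=A_N$ by \eqref{C_D} with $d=2$. Since the Pauli strings satisfy $\Tr(\sigma_\mu\sigma_\nu)=2^N\delta_{\mu\nu}$, the purity $\Tr(\rho^2)=1$ of the pure state $\rho=\proj{\Psi}$ becomes $\sum_{k=0}^N A_k=2^N$. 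This alone gives only $\mathcal{C}\le 2^N$, so I would bring in a second, parity-sensitive identity.

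For that I would invoke the universal spin flip $\tilde\rho\equiv\sigma_2^{\otimes N}\rho^*\sigma_2^{\otimes N}$, which for $\rho=\proj{\Psi}$ is again a pure state $\proj{\tilde\Psi}$ with $\ket{\tilde\Psi}=\sigma_2^{\otimes N}\ket{\Psi^*}$. Using $\sigma_2\sigma_0\sigma_2=\sigma_0$ and $\sigma_2\sigma_j\sigma_2=-\sigma_j^*$ for $j=1,2,3$, every local factor carrying a nontrivial $\sigma$ picks up a minus sign, so $\sigma_2^{\otimes N}\sigma_\mu\sigma_2^{\otimes N}=(-1)^{w(\mu)}\sigma_\mu^*$, and hence the correlation tensor of $\tilde\rho$ is $\tilde T_\mu=(-1)^{w(\mu)}T_\mu$ ($T_\mu$ being real). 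Expanding $\Tr(\tilde\rho\,\rho)$ in the Pauli basis then gives $\Tr(\tilde\rho\,\rho)=2^{-N}\sum_\mu(-1)^{w(\mu)}T_\mu^2=2^{-N}\sum_{k=0}^N(-1)^kA_k$. But also $\Tr(\tilde\rho\,\rho)=|\braket{\Psi}{\tilde\Psi}|^2$, and $\braket{\Psi}{\tilde\Psi}=\langle\Psi|\sigma_2^{\otimes N}|\Psi^*\rangle$ is the value of the bilinear form $w\mapsto w^{T}\sigma_2^{\otimes N}w$ at $w=\ket{\Psi^*}$; since $\sigma_2^{T}=-\sigma_2$, the matrix $\sigma_2^{\otimes N}$ is \emph{antisymmetric} precisely when $N$ is odd, so that form vanishes on every vector. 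Hence $\sum_{k=0}^N(-1)^kA_k=0$ for odd $N$.

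Adding the two sum rules yields $\sum_{k\ \mathrm{even}}A_k=\sum_{k\ \mathrm{odd}}A_k=2^{N-1}$. For odd $N$ the term $A_N$ lies in the odd sum and every $A_k\ge0$, so $\mathcal{C}=A_N\le 2^{N-1}$, with equality iff the lower odd sectors $A_1=A_3=\dots=A_{N-2}$ all vanish --- which is exactly the case for the odd-$N$ GHZ state, recovering the claimed maximizer. I expect the crux to be the second paragraph: spotting that the spin-flip map supplies the missing parity sum rule and that $\sigma_2^{\otimes N}$ switches from symmetric to antisymmetric exactly at odd $N$. The remainder --- the identity $\tilde T_\mu=(-1)^{w(\mu)}T_\mu$ and the two Bloch-form sum rules --- is a short, mechanical computation.
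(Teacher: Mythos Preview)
Your proof is correct and follows essentially the same route as the paper's: both combine the purity identity $\sum_k A_k = 2^N$ with the alternating-sign identity $\sum_k (-1)^k A_k = 0$ (valid for odd $N$) to conclude that the odd-weight sectors sum to $2^{N-1}$, whence $\mathcal{C}=A_N\le 2^{N-1}$. The only difference is that the paper cites the alternating identity from Ref.~\cite{ES15}, whereas you supply it directly via the spin-flip map and the antisymmetry of $\sigma_2^{\otimes N}$ for odd $N$ --- which is in fact the standard derivation behind that cited result --- so your version is simply a self-contained rendering of the same argument.
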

\begin{proof}
Let us denote by $C_k$ the sum of squared correlations between any $k$ observers. 
In particular, for a system of $N$ qubits in a state $\ket{\psi}$ we have $C_N = \mathcal{C}(\psi)$. 
The purity condition of $\ket{\psi}$ implies
\begin{align}
	1 + C_1 + C_2 +C_3+ \dots + C_{N-1}+C_N = 2^N, \label{MaxCorrProof1}
\end{align}
whereas Ref.~\cite{ES15} demonstrates for $N$ odd
\begin{align}
	1-C_1 + C_2 - C_3 +\dots + C_{N-1} - C_N = 0. \label{MaxCorrProof2}
\end{align}
Summing up \eqref{MaxCorrProof1} and \eqref{MaxCorrProof2} we obtain
\begin{align}
	C_1 + C_3 +\dots+ C_N = 2^{N-1}.
\end{align}
Theorem follows by noting that each $C_k$ is non-negative.
\end{proof}
The bound of \eqref{EQN_MAXCORR} is tight and it is achieved, e.g. by the GHZ state
\begin{align}
	\ket{GHZ}_N =\frac{1}{\sqrt 2}\left(\ket{0}^{\otimes N} + \ket{1}^{\otimes N}\right). \label{DEF_GHZ}
\end{align}
Although the theorem works only for states of odd number of qubits $N$, a similar bound, of value $2^{N-1}+1$, is observed for GHZ states of even $N$.
We conjecture that this is indeed the maximum possible value of $\mathcal{C}$ for any even $N$.
We have been uniformly sampling pure states randomly over respective spaces and so far no counter-example to the conjecture has been found.
We also note that GHZ states are not the only states with maximal value of $\mathcal{C}$.
For example, for $N = 4$ qubits the same value is attained by the double singlet state
\begin{align}
	\ket{\Psi^-}\ket{\Psi^-}=\frac{\ket{01}-\ket{10}}{\sqrt 2}\otimes \frac{\ket{01}-\ket{10}}{\sqrt 2}.
\end{align}


\subsection{Random correlations of cluster states}

As another concrete example we calculate the length of correlations for 2D cluster states~\cite{PhysRevLett.85.910,PhysRevLett.86.5188}.
Since they are universal for quantum computing their geometric measure of entanglement displays intermediate values
in agreement with findings that too highly entangled states are useless for universal quantum computing~\cite{PhysRevLett.102.190501,PhysRevLett.102.190502}.
We find the same behavior of the length of correlations.

Consider a square lattice of size $n$ with each node connected to its nearest neighbors.
At each node $a$, let there be a qubit associated with it and an operator 
\begin{align}
	K_a = \sigma^{(a)}_z\underset{b\in \mathcal{N}(a)}{\bigotimes}\sigma^{(b)}_x,
\end{align}
where the superscripts $a$ and $b$ show on which qubits the Pauli matrices act.
The tensor product is taken over the set $\mathcal{N}(a)$ of nodes neighbouring with $a$.
The cluster state $\ket{C}$ is defined as a common eigenstate of operators $K_a$~\cite{PhysRevLett.85.910}:
\begin{align}
	K_a \ket{C}=\ket{C} \:\:\:\:\:\:\:\:\:\:\:\: \forall a.
\end{align}
We compute the length of correlations for the $n \times n$ cluster states for small $n$ and extrapolate to larger $n$.
As seen in Fig.~\ref{FIG_CLUSTER_ENT}, random correlations of cluster states are halfway between product states and GHZ states, 
so that they mimic behavior of the geometric measure of entanglement.

\begin{figure}[]
\includegraphics[width=90mm]{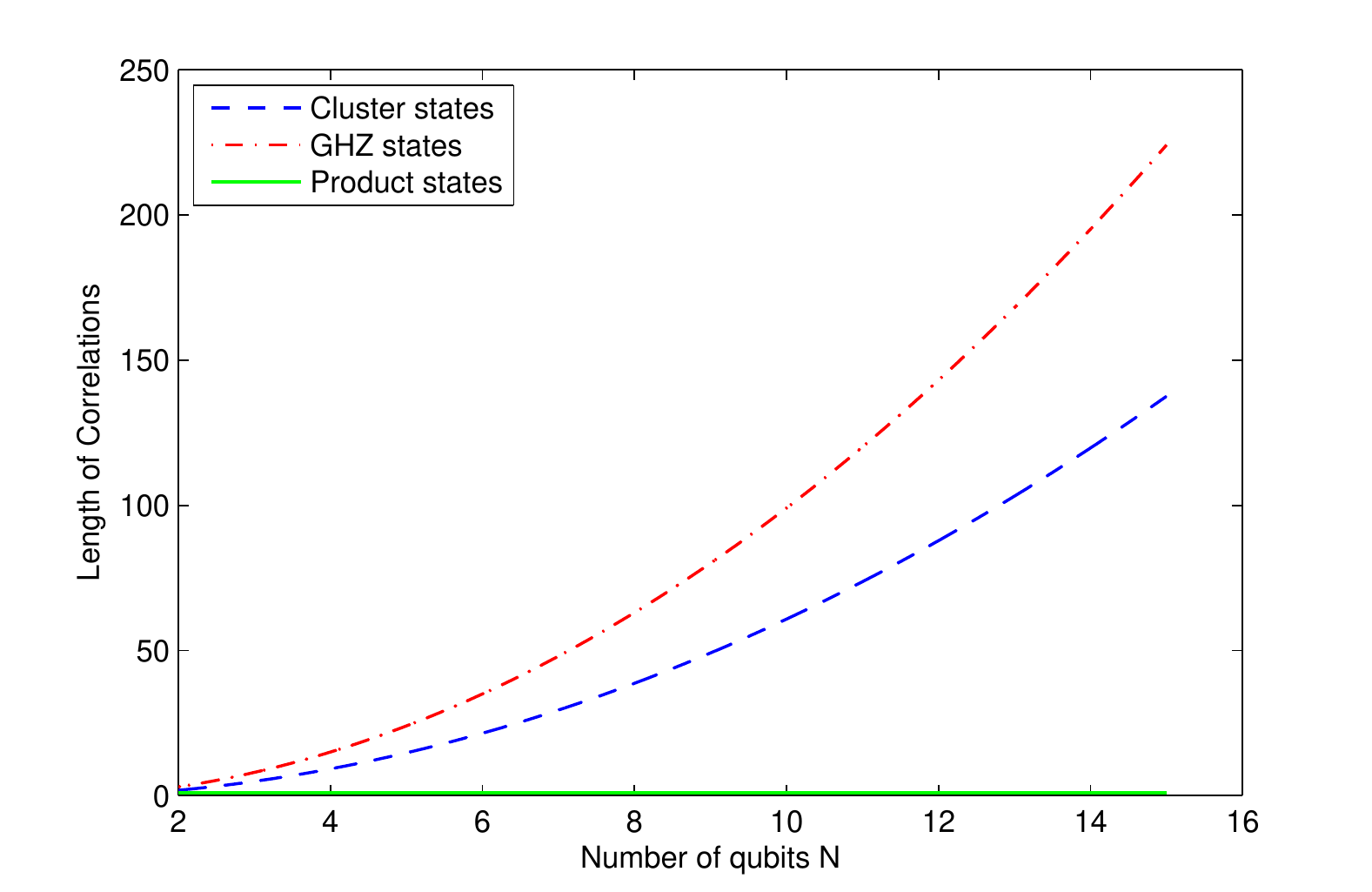}
\caption{Length of correlations for 2D cluster states (blue dashed line) and GHZ states (red dash-dot line). 
Here the length of correlations has the same features as the geometric measure of entanglement.
However, see Sec.~\ref{SEC_MONOTONE}.
The curve is obtained by fitting an exponential function to numerical data.\label{FIG_CLUSTER_ENT}}
\end{figure}


\subsection{Does $\mathcal{C}$ measure entanglement?}
\label{SEC_MONOTONE}

Since $\mathcal{C}$ and $\mathcal{R}$ perfectly distinguish pure entangled states from disentangled ones,
and in calculations of concrete examples they display proportionality to the geometric measure of entanglement, we ask if they are entanglement monotones in general.
We prove that they are the monotones for bipartite systems.
However, this does not generalize to multipartite systems as we will show on counter-examples below.

\subsubsection{Random correlations may increase on average under local operations}

Consider the following state of five qubits
\begin{align}
	\ket{\Psi} = \frac{\ket{0}\ket{GHZ}_4+\ket{1}\ket{D^2_4}}{\sqrt{2}},
\end{align}
where $\ket{GHZ}_4$ is the GHZ state of four qubits, defined in \eqref{DEF_GHZ}, and $\ket{D_4^2}$ is the four-qubit Dicke state
\begin{eqnarray}
	\ket{D^2_4} & = & \frac{1}{\sqrt{6}} (\ket{1100} + \ket{1010} + \ket{1001}, \nonumber \\
	& + & \ket{0110} + \ket{0101} + \ket{0011}).
\end{eqnarray}
It is straight forward to verify that the length of correlations of $\ket{\Psi}$ is $\mathcal{C}(\Psi) = 8$.
If a projective measurement in the computational basis is performed on the first qubit, the state will collapse to either $\ket{0}\ket{GHZ}$ or $\ket{1}\ket{D^2_4}$, both of which have length of correlations equal to $9$. 
Thus $\mathcal{C}$ increases after such local measurement independently of the actual measurement outcome. 
Thus, it is not a legitimate entanglement measure~\footnote{This section gives a counterexample to Prop. $6$ of Ref.~\cite{PhysRevA.80.042302}. We have informed the authors and they are preparing a corrigendum.}.
This five-qubit example is the simplest that we were able to find.
Therefore, in principle, the random correlations could still be entanglement monotone for systems of three and four qubits.

\subsubsection{Random correlations and LOCC conversion between pure states}

Previous section disproves strong version of monotonicity of random correlations under LOCC.
There is still a possibility that $\mathcal{R}$ is a monotone not on average, i.e. $\mathcal{R}$ could be a monotone under LOCC operations that map pure states to pure states.
We show here that this weaker form of monotonicity also does not hold for $\mathcal{R}$ when applied to multipartite systems.

For bipartite systems the following statement holds.
\begin{theorem}
For pure bipartite states, $\ket{\psi}$ can be converted by LOCC to $\ket{\phi}$ if and only if $\mathcal{C}(\psi) \ge \mathcal{C}(\phi)$.
\end{theorem}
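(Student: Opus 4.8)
The plan is to reduce the statement to the Schmidt coefficients of the two states and then invoke Nielsen's majorization theorem, the standard complete characterization of single-copy LOCC convertibility between bipartite pure states: $\ket{\psi}\LOCC\ket{\phi}$ if and only if the vector $\lambda^\psi$ of squared Schmidt coefficients (equivalently, the spectrum of the reduced density matrix $\rho_A^\psi$) is majorized by $\lambda^\phi$. The whole theorem then follows once I show that $\mathcal{C}$ is an order-reversing function of the reduced-state purity and that, for the states at hand, this single scalar already encodes the full content of the majorization relation.

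First I would obtain a closed form for the length of correlations of a bipartite pure state in terms of its marginal. Writing $\ket{\psi}=\sum_i\sqrt{\lambda_i}\,\ket{i}\ket{i}$ in the Schmidt basis, I use the Parseval-type identity underlying the basis invariance of $\mathcal{C}$, namely that the total squared norm of the correlation tensor equals $d^2\,\Tr(\rho^2)=d^2$ for a pure state. Splitting this sum according to how many of the two indices equal $0$ yields
\begin{align}
1+C_A+C_B+\mathcal{C}=d^2,
\end{align}
where $C_A=d\,\Tr(\rho_A^2)-1$ and $C_B=d\,\Tr(\rho_B^2)-1$ are the squared Bloch-vector lengths of the single-party marginals. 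Since $\Tr(\rho_A^2)=\Tr(\rho_B^2)=\sum_i\lambda_i^2$ for a pure bipartite state, this collapses to
\begin{align}
\mathcal{C}=d^2+1-2d\sum_i\lambda_i^2,
\end{align}
so that $\mathcal{C}$ is a \emph{strictly decreasing} function of the reduced purity $\sum_i\lambda_i^2$ alone. (As a sanity check, the product case $\lambda=(1,0,\dots)$ gives $\mathcal{C}=(d-1)^2$, consistent with Theorem~\ref{TH_1}, and the maximally entangled case gives $\mathcal{C}=d^2-1$.)

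The forward implication is then immediate: if $\ket{\psi}\LOCC\ket{\phi}$, Nielsen gives $\lambda^\psi\prec\lambda^\phi$, and since $x\mapsto\sum_i x_i^2$ is Schur-convex we get $\sum_i(\lambda_i^\psi)^2\le\sum_i(\lambda_i^\phi)^2$, whence $\mathcal{C}(\psi)\ge\mathcal{C}(\phi)$ by monotonicity of the closed form. The reverse implication is the crux and the main obstacle: from $\mathcal{C}(\psi)\ge\mathcal{C}(\phi)$ I only get the \emph{scalar} inequality $\sum_i(\lambda_i^\psi)^2\le\sum_i(\lambda_i^\phi)^2$, and I must upgrade this to the full \emph{vector} relation $\lambda^\psi\prec\lambda^\phi$ before Nielsen can be applied. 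The step works precisely when the purity pins down the entire sorted Schmidt spectrum, so that the two vectors are necessarily comparable in the majorization order. For the two-qubit marginals relevant here each spectrum is $(\mu,1-\mu)$ and is therefore completely determined by its purity; comparing purities is then exactly comparing the full spectra, and the scalar $\mathcal{C}$ is a complete invariant of the LOCC preorder, closing both directions. I expect the delicate point to be flagging explicitly that it is this determinacy of the marginal spectrum by a single number that makes the biconditional tight, since once the reduced state carries more than one independent spectral parameter the purity ceases to totally order the states and the equivalence would have to be read through the qubit marginals.
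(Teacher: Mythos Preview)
Your approach matches the paper's: both derive $\mathcal{C}(\psi)=d^{2}+1-2d\sum_{j}p_{j}^{2}(\psi)$ from purity and Schmidt decomposition, and both invoke Nielsen's majorization criterion together with the convexity of $x\mapsto x^{2}$. The forward direction you give is exactly the paper's, phrased via Schur-convexity rather than Karamata, which is the same content.

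Where you are actually \emph{more careful} than the paper is in the reverse implication. The paper asserts that, by Karamata's inequality, $p(\psi)\prec p(\phi)$ holds \emph{if and only if} $\sum_{j}p_{j}^{2}(\psi)\le\sum_{j}p_{j}^{2}(\phi)$. But Karamata (equivalently, Schur-convexity of a single convex function) gives only the forward direction; the converse fails once $d\ge 3$. For instance, with $d=3$ take $p(\psi)=(0.5,0.4,0.1)$ and $p(\phi)=(0.6,0.2,0.2)$: one has $\sum p_{j}^{2}(\psi)=0.42<0.44=\sum p_{j}^{2}(\phi)$, so $\mathcal{C}(\psi)>\mathcal{C}(\phi)$, yet $p(\psi)\not\prec p(\phi)$ (since $0.5+0.4>0.6+0.2$) and hence $\ket{\psi}\not\LOCC\ket{\phi}$. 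Thus the paper's theorem, read literally for arbitrary $d$, is false; your restriction to two-level marginals, where the sorted spectrum $(\mu,1-\mu)$ is fixed by its purity and the majorization order becomes total, is precisely the regime in which the biconditional survives. Your instinct to flag this determinacy as the crux is correct and improves on the paper's argument.
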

\begin{proof}
From Nielsen's theorem~\cite{PhysRevLett.83.436}, $\ket{\psi}$ can be converted by LOCC to $\ket{\phi}$ if and only if the Schmidt probabilities $p_j(\psi)$ of $\ket{\psi}$ is majorized by $p_j(\phi)$ of $\ket{\phi}$:
\begin{equation}
\sum_{j=1}^k p_j(\psi) \le \sum_{j=1}^k p_j (\phi),
\end{equation}
for any $k=1,\dots,d$ and the Schimdt probabilities are arranged in decreasing order. 
From purity condition and the Schmidt decomposition, we find $\mathcal{C}(\psi) = d^2 + 1 - 2 d \sum_{j} p_j^2(\psi)$.
Since the square function is strictly convex in $\mathbb{R}^+$, by Karamata's inequality $p_j(\psi)$ is majorized by $p_j(\phi)$ if and only if $\sum_{j} p_j^2(\psi) \le \sum_{j} p_j^2(\phi)$, if any only if $\mathcal{C}(\psi) \ge \mathcal{C}(\phi)$.
\end{proof}

We give the following counter-example for multipartite systems.
Consider the pair of states:
\begin{eqnarray}
\ket{\psi} & = & \frac{\ket{0}\ket{\psi^-}\ket{\psi^-}+\ket{1}\ket{\psi^+}\ket{\psi^+}}{\sqrt{2}},\\
\ket{\phi} & = & \ket{0}\ket{\psi^-}\ket{\psi^-},
\end{eqnarray}
where $\ket{\psi^{\pm}} = \frac{1}{\sqrt{2}}(\ket{01} \pm \ket{10})$ are the two Bell states.
Starting with  $\ket{\psi}$ we measure the first qubit in the standard basis and, depending on the outcome, apply suitable local unitaries on say the second and fourth qubit to obtain $\ket{\phi}$. 
However, $\mathcal{C}(\psi) = 8$ whereas $\mathcal{C}(\phi) = 9$.


\subsection{Witnessing entanglement with single random setting per party}

Although random correlations do not measure entanglement we argue here that they are useful for entanglement detection.
In particular, they allow to detect quantum entanglement with high level of confidence even with a single random measurement setting per party~\cite{PhysRevA.92.050301}.
Such advantage is relevant to experiments.
For example, the small count rates in multi-photon experiment (e.g. Ref.~\cite{NatPhot.6.225}) make measurement of every next setting expensive. 
The strategy presented here reduces the number of settings required to detect entanglement to its ultimate minimum.

In principle, to determine $\mathcal{R}$, an infinite number of measurements has to be performed 
both in terms of $K$, the resources needed to estimate correlation functions, and in terms of $M$, the resources needed for averaging over random settings.
In Ref.~\cite{PhysRevA.92.050301} we introduced entanglement witness~\cite{HoroNPT,RevModPhys.81.865} that takes the finiteness of $K$ and $M$ into account,
and here we demonstrate explicitly the effect of finite $K$ on this witness.
For a single random setting, $M=1$, the witness reads:
\begin{equation}
\mathcal{R}_{K} > 1/3^N + \delta \quad \implies \quad \textrm{likely } \psi \textrm{ is ent}, \label{CRI_RANDWIT}
\end{equation}
where $1/3^N$ is the random correlation of the product state of $N$ qubits
and $\delta$ is used to set the confidence level of entanglement detection.
Namely, if estimated correlation $\mathcal{R}_{K}$ is far from what is expected for a product state, most likely we are measuring an entangled state.
In our calculations, the confidence level is defined by the probability that the random correlation of a product state is $95.4\%$.
Table~\ref{TAB_PROB}  shows the probability to detect entanglement in GHZ states (and states that can be reached from GHZ by local unitaries) of $N$ qubits with both finite and infinite $K$.
For sufficiently big $K$ the chance of detection grows with $N$.
For finite $K$ the detection probability grows for small $N$ and then starts decaying.
This is because the random correlation of any state is exponentially small in $N$, and therefore there exists critical $N$ for which the error $1/\sqrt{K}$ in estimation of the average due to finite $K$ is comparable with the bound of Eq. (\ref{CRI_RANDWIT}).
As illustration, Tab.~\ref{TAB_PROB} shows that $K=1000$ trials is essentially infinity for up to $N=5$ qubits.
For $N=6$, the bound of the entanglement witness is $\approx 0.01$ and indeed matches random correlation of the six-qubit GHZ state, $\approx 0.04$, reduced by the error $1/\sqrt{K} \approx 0.03$. 

\begin{table}[h]
\centering
\caption{Probability of detecting $N$-qubit GHZ entanglement with a single random measurement per party at confidence level of $95.4\%$.
$K$ gives the number of trials after which the correlation function is estimated.}
\label{TAB_PROB}
\begin{tabular}{r|| c c c c c c c c}
$N$                     & 3   & 4    & 5    & 6    & 7    & 8    & 9    & 10  \\ \hline 
$K=1000$                & \, 26\% & \, 44\% & \, 47\% & \, 57\% & \, 52\% & \, 48\% & \, 41\% & \, 34\%\\
$K \rightarrow \infty $ & \, 26\% & \, 44\% & \,  48\% & \, 63\% & \,  67\% & \, 77\% & \, 80\% & \, 86\%
\end{tabular}
\end{table}


\section{Mixed states}
\label{SEC_MIXED}

So far we have only considered the length of correlations in pure states.
Although the previous definition $\mathcal{C}$, as given in Eq.~(\ref{DEF_C}), suits a mixed state $\rho$ it longer identifies entanglement with certainty as it does for pure states. 
Clearly, $\mathcal{C}$ can be even less than unity for mixed states.
Nevertheless, a necessary and sufficient condition can still be established for entanglement in rank-$2$ states.

Our approach is to define a new quantity via convex roof extension of the length of correlations:
\begin{align}
	\mathcal{E}(\rho)=\min_{\{\mu_k,\Psi_k\}} \sum_k \mu_k \mathcal{C}(\Psi_k), \label{CONV_ROOF}
\end{align}
where the sum is minimized over all possible pure-state decompositions $\{\mu_k,\Psi_k\}$ of $\rho$, i.e. $\rho = \sum_k \mu_k \ket{\Psi_k}\bra{\Psi_k}$.
A state $\rho$ is entangled if and only if $\mathcal{E}(\rho)>1$.
As in other convex roof constructions, calculation of $\mathcal{E}(\rho)$ is generally a challenge. 
However, for certain families of mixed states, explicit formulae for $\mathcal{E}(\rho)$ can be found.

\subsection{Necessary and sufficient condition for entanglement in rank-two states}

Rank-two states are mixed states which belong to a subspace spanned by only two distinct pure states. 
They possess properties similar to those of a single qubit that notably simplify the minimization problem of (\ref{CONV_ROOF}). 
In what follows, we shall acquire a similar technique to that presented by Osborne \cite{PhysRevA.72.022309} to evaluate the entanglement of an arbitrary mixed state of rank two. 

\begin{theorem} \label{TH_RANK2}
For a multipartite mixed state of rank two
\begin{equation}
	\mathcal{E}(\rho) = \mathcal{C}(\rho)+\frac{1}{2} \left(1-\Tr(\rho^2) \right) w_{\min},
\end{equation}
where $\mathcal{C}(\rho)$ is given in Eq.~(\ref{DEF_C}) and $w_{\min}$ is the lowest eigenvalue of $3 \times 3$ matrix defined in the proof.
\end{theorem}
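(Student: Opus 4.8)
The plan is to adapt Osborne's rank-two technique: parametrize the pure states that can appear in the convex roof by points of a Bloch sphere, observe that $\mathcal{C}$ becomes an inhomogeneous quadratic function on that sphere, and then solve the resulting constrained minimization by a convexity argument together with an explicit optimal ensemble. First I would set up the geometry. A rank-two $\rho$ is supported on a two-dimensional subspace $\mathcal{H}_\rho=\mathrm{span}\{\ket{e_1},\ket{e_2}\}$, and in any decomposition $\rho=\sum_k\mu_k\proj{\Psi_k}$ every $\ket{\Psi_k}$ must lie in $\mathcal{H}_\rho$, since a component orthogonal to the support is incompatible with $\rho\ge 0$ having that support. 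Hence each $\ket{\Psi_k}$ behaves like a qubit in $\mathcal{H}_\rho$ and can be written $\proj{\Psi(\hat n)}=\tfrac12(\Pi+\hat n\cdot\vec\tau)$, where $\Pi$ projects onto $\mathcal{H}_\rho$, $\vec\tau=(\tau_1,\tau_2,\tau_3)$ are the associated Pauli-type operators inside the subspace, and $\hat n\in S^2$; likewise $\rho=\tfrac12(\Pi+\vec s\cdot\vec\tau)$ with $|\vec s|\le 1$, and a short calculation gives $|\vec s|^2=2\Tr(\rho^2)-1$. The decompositions of $\rho$ are thus precisely the finite ensembles $\{(\mu_k,\hat n_k)\}$ on $S^2$ with barycentre $\sum_k\mu_k\hat n_k=\vec s$.

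The key algebraic step is that $\mathcal{C}$ restricted to $\mathcal{H}_\rho$ is an inhomogeneous quadratic function of the Bloch vector. Since every correlation-tensor component $T_{j_1\dots j_N}=\Tr(\rho\,\sigma_{j_1}^\dag\otimes\dots\otimes\sigma_{j_N}^\dag)$ is a linear functional of the density matrix, on $\mathcal{H}_\rho$ it is affine in $\hat n$, namely $T_{j_1\dots j_N}(\Psi(\hat n))=t^0_{j_1\dots j_N}+\vec t_{j_1\dots j_N}\cdot\hat n$ with coefficients obtained by tracing $\sigma_{j_1}^\dag\otimes\cdots\otimes\sigma_{j_N}^\dag$ against $\Pi$ and against the $\tau_i$. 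Summing the squared moduli over $j_1,\dots,j_N$,
\begin{equation}
\mathcal{C}(\Psi(\hat n))=\hat n^{\top}M\,\hat n+\vec b\cdot\hat n+c ,
\end{equation}
with $M=\sum_{j_1\dots j_N}\mathrm{Re}\!\left(\vec t_{j_1\dots j_N}\,\vec t_{j_1\dots j_N}^{\,\dag}\right)$ a real symmetric $3\times 3$ matrix, which is (up to normalization) the matrix referred to in the statement, while $\vec b$ is real and $c$ is a constant. By the same linearity the mixed-state value is the same quadratic evaluated at the interior point, $\mathcal{C}(\rho)=\vec s^{\top}M\vec s+\vec b\cdot\vec s+c$.

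Now I would carry out the minimization. For every ensemble with barycentre $\vec s$ the affine part averages to the fixed value $\vec b\cdot\vec s+c$, so $\mathcal{E}(\rho)=\vec b\cdot\vec s+c+\min\sum_k\mu_k\,\hat n_k^{\top}M\hat n_k$, the minimum being over ensembles with barycentre $\vec s$. Using $|\hat n_k|=1$ I write $\hat n_k^{\top}M\hat n_k=\hat n_k^{\top}(M-w_{\min}\mathbb{I})\hat n_k+w_{\min}$, where $w_{\min}$ is the least eigenvalue of $M$, so that $M-w_{\min}\mathbb{I}\succeq 0$. Convexity of $\hat n\mapsto\hat n^{\top}(M-w_{\min}\mathbb{I})\hat n$ and Jensen's inequality give $\sum_k\mu_k\,\hat n_k^{\top}M\hat n_k\ge\vec s^{\top}(M-w_{\min}\mathbb{I})\vec s+w_{\min}$. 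Equality is attained by the two-element ensemble whose Bloch vectors are the two points where the line through $\vec s$ along the $w_{\min}$-eigenvector of $M$ meets $S^2$, weighted so the barycentre is $\vec s$ (an interior $\vec s$ always lies between these two points with nonnegative weights): along that line $M-w_{\min}\mathbb{I}$ annihilates the direction, so $\hat n^{\top}(M-w_{\min}\mathbb{I})\hat n$ is constant and equals $\vec s^{\top}(M-w_{\min}\mathbb{I})\vec s$. Reassembling, $\mathcal{E}(\rho)=\vec s^{\top}M\vec s+\vec b\cdot\vec s+c+w_{\min}(1-|\vec s|^2)=\mathcal{C}(\rho)+w_{\min}(1-|\vec s|^2)$, and replacing $1-|\vec s|^2$ by its expression in $\Tr(\rho^2)$ yields the stated identity once $M$ is normalized so the coefficient reads $\tfrac12\bigl(1-\Tr(\rho^2)\bigr)$. (The criterion $\mathcal{E}(\rho)>1\Leftrightarrow\rho$ entangled then follows, using $\mathcal{C}(\psi)\ge 1$ for pure qubit states with equality iff fully product.)

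The main obstacle is twofold. The conceptually delicate part is the attainability step: checking that the line through an arbitrary interior $\vec s$ in the least-eigenvector direction really does meet $S^2$ in two points flanking $\vec s$, that the convex weights are nonnegative, and that the resulting two-element ensemble reconstructs $\rho$. The more tedious part is the normalization bookkeeping — carrying the factors in the convention $\proj{\Psi(\hat n)}=\tfrac12(\Pi+\hat n\cdot\vec\tau)$ through the definition of $M$ and through $|\vec s|^2=2\Tr(\rho^2)-1$ so that the precise rescaling of $M$, hence of $w_{\min}$, that appears in the statement is the correct one. Everything else — the reduction to in-support ensembles, the affineness of the $T$'s, and the Jensen bound — is routine.
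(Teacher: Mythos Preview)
Your proposal is correct and follows essentially the same route as the paper: both parametrize pure states in the rank-two support by Bloch vectors, recognize $\mathcal{C}$ as an inhomogeneous quadratic in those coordinates, and minimize via convexity with the optimal ensemble given by the two sphere points on the chord through $\vec s$ along the least-eigenvector direction. The only cosmetic difference is packaging---the paper extracts the quadratic form by projecting the two-copy operator $\mathcal{S}$ onto the $4\times4$ support and reading off its Pauli expansion (so the paper's $3\times3$ matrix is $\mathrm{diag}(w_1,w_2,w_3)$ from the $\sum_i w_i\,\sigma_i\otimes\sigma_i$ term of $\tilde{\mathcal{S}}$, equal to $4M$ in your convention), while you build $M$ directly from the affine dependence of each $T_{j_1\dots j_N}$; the normalization you flag is exactly this factor of $4$, which together with $1-|\vec s|^2=2(1-\Tr\rho^2)$ reproduces the $\tfrac12$ in the statement.
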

\begin{proof}
See Appendix~\ref{APP_RANK2}.
\end{proof}

The advantage of Th.~\ref{TH_RANK2} lies in its computability.
As a demonstration, we prove that a nontrivial mixture of a product state and an entangled pure state is always entangled~\cite{TheorComputSci.292.589,JPhysA.47.424025}.
It turns out that $\mathcal{E}(\rho)$ behaves similarly to entanglement quantifiers.
Since every product state can be brought to $\ket{00\dots 0}$ using a local unitary transformation, let us write such mixture in the most general form as
\begin{align}
	\rho = p \proj{00\dots 0} + (1-p) \proj{\Phi},
\end{align}
where $p$ is a probability and $\ket{\Phi}$ is a general pure state.
A compact formula for $\mathcal{E}(\rho)$ could be found if we further restrict the state $\ket{\Phi}$ to a superposition of $\ket{00\dots 0}$ and another product state $\ket{\alpha}$ orthogonal to $\ket{00\dots 0}$.
Direct application of Th.~\ref{TH_RANK2} shows
\begin{align}
	\mathcal{E}(\rho)=1+(1-p)^2 \left( \mathcal{C}(\Phi)-1 \right).
\end{align}
If $\ket{\Phi}$ is entangled, its length of correlations $\mathcal{C}(\Phi)>1$. 
In this case also $\mathcal{E}(\rho)>1$ and the mixture is entangled for all non-trivial values of $p$.

\subsection{Witness for general states}

We extend the idea used in Theorem \ref{TH_RANK2} to mixed states of arbitrary rank.
By following the same steps as in the preceding proof, with the Pauli matrices replaced by generalized Gell-Mann matrices,
we obtain a lower bound of the following theorem.
\begin{theorem} \label{TH_RANK_M}
	For a multipartite mixed state of rank $m$, 
	\begin{align}
		\mathcal{E}(\rho)\geq \mathcal{W}(\rho)\equiv \mathcal{C}(\rho)+\frac{w_{\min}}{m^2} \left(1 - \Tr(\rho^2) \right), \label{EQ_RANK_M}
	\end{align}
	where all the quantities are defined in analogy to Th.~$\ref{TH_RANK2}$.
\end{theorem}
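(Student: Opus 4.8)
The plan is to mirror the structure of the proof of Theorem~\ref{TH_RANK2}, keeping track only of the lower bound since that is all Theorem~\ref{TH_RANK_M} asserts. First I would fix an optimal pure-state decomposition $\rho=\sum_k \mu_k \proj{\Psi_k}$ achieving $\mathcal{E}(\rho)$ and, following Osborne~\cite{PhysRevA.72.022309}, parametrize all decompositions of the rank-$m$ state by isometries (unitary ``rotations'' of a fixed eigenbasis $\{\sqrt{\lambda_i}\ket{e_i}\}$ of $\rho$). Writing each $\mathcal{C}(\Psi_k)$ via the generalized Gell-Mann expansion in place of the Pauli expansion, I would express $\sum_k \mu_k \mathcal{C}(\Psi_k)$ as $\mathcal{C}(\rho)$ plus a ``cross-term'' that measures the discrepancy between $\mathcal{C}$ evaluated on the mixture and on its components. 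The key algebraic identity to isolate is the analogue of the purity relation used for qubits: $\mathcal{C}(\rho)$ of a rank-$m$ state differs from $\sum_k \mu_k \mathcal{C}(\Psi_k)$ by a quadratic form in the off-diagonal coherences, and this quadratic form factors through a small matrix $W$ (here $3\times 3$ in the qubit-labelled version, built from the Gell-Mann structure constants as in Th.~\ref{TH_RANK2}).

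Next I would bound that cross-term from below. For the rank-two case equality holds because the space of coherences is one-dimensional (a single Bloch-like vector), so the minimum of the quadratic form is exactly $w_{\min}$ times the norm of that vector, which equals $\tfrac12(1-\Tr\rho^2)$. For rank $m$ the coherence data is larger: there are $\binom{m}{2}$ independent off-diagonal blocks, and the relevant norm is controlled by $\sum_{i<j}\lambda_i\lambda_j = \tfrac12(1-\Tr\rho^2)$, but now distributed among $\sim m^2$ ``directions.'' The step I would carry out is to bound each contribution by $w_{\min}$ (the smallest eigenvalue of the same $3\times 3$ matrix) and then sum, using that the total coherence weight is $\tfrac12(1-\Tr\rho^2)$ but each of the $\binom{m}{2}$ pairs contributes and a crude counting argument (or Cauchy–Schwarz over the $m^2$-dimensional index set) produces the factor $1/m^2$ rather than $1/2$. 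Concretely I expect the inequality $\sum_k \mu_k \mathcal{C}(\Psi_k) \ge \mathcal{C}(\rho) + w_{\min}\cdot\tfrac{1}{m^2}(1-\Tr\rho^2)$ to drop out once the quadratic form is shown to be bounded below by $w_{\min}$ times the trace of the coherence Gram matrix, after normalizing by the number of pair-indices.

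The hard part will be pinning down the correct combinatorial constant and showing that $1/m^2$ is the right (if not tight) factor: the rank-two argument uses the fact that a two-dimensional subspace behaves like a single qubit, so the ``off-diagonal'' part is a single $3$-vector and the minimization is exact; for general $m$ one must decide whether to treat the $\binom{m}{2}$ two-dimensional sub-blocks independently (losing a factor that scales like $m^2$) or to attempt a joint optimization over the full isometry group (which would be tighter but likely not closed-form). I would take the former, safer route: decompose the coherence contribution as a sum over pairs $i<j$, apply the rank-two-style bound $w_{\min}\lambda_i\lambda_j$ to each, and then note $\sum_{i<j}\lambda_i\lambda_j = \tfrac12(1-\Tr\rho^2)$, absorbing the crude $\binom{m}{2}\le m^2/2$ overcounting into the stated $1/m^2$. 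The remaining routine work—verifying basis-independence of $\mathcal{C}$ in the Gell-Mann basis (already granted by Appendix~\ref{APP_CInv}), checking signs in the Gell-Mann structure constants, and confirming that the $3\times 3$ matrix is exactly the one from Th.~\ref{TH_RANK2}—I would relegate to the appendix, since it is a direct transcription of the earlier proof.
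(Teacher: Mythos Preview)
Your plan diverges from the paper's approach and contains a structural misconception. The paper does \emph{not} keep the $3\times 3$ matrix from Theorem~\ref{TH_RANK2} and then recover the $1/m^2$ by a pairwise counting argument. Instead it literally reruns the proof of Theorem~\ref{TH_RANK2} with the Pauli matrices on the two-dimensional support replaced by generalized Gell-Mann matrices on the $m$-dimensional support. In that rerun the Bloch representation of $\tilde{\mathcal S}$ (the projection of $\mathcal S$ onto $\mathrm{supp}(\rho)^{\otimes 2}$) has a correlation block that is an $(m^2-1)\times(m^2-1)$ matrix, and $w_{\min}$ is its smallest eigenvalue. The prefactor $1/m^2$ is simply the dimensional normalisation replacing the $1/4=1/2^2$ in Eq.~\eqref{EQN_BLOCH2}; purity of each $\ket{\Psi_k}$ again fixes the squared length of each Bloch vector, and the same convexity (Jensen) step produces the inequality---now only an inequality because for $m>2$ not every vector of the correct length corresponds to a physical pure state, so the minimiser need not be attained.

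Your alternative route---split the coherence into $\binom{m}{2}$ two-dimensional blocks and apply the rank-two bound to each---does not go through as stated. The pure states $\ket{\Psi_k}$ appearing in a decomposition of a rank-$m$ state are generically supported on the full $m$-dimensional subspace, so there is no way to confine each term $\mathcal C(\Psi_k)$ to a single $(i,j)$ pair and invoke Theorem~\ref{TH_RANK2}. Your accounting for the $1/m^2$ factor is also inconsistent: you correctly note $\sum_{i<j}\lambda_i\lambda_j=\tfrac12(1-\Tr\rho^2)$ \emph{exactly}, and then speak of absorbing an ``overcounting'' $\binom{m}{2}\le m^2/2$---but there is nothing overcounted in that identity, so no such factor appears. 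The fix is to drop the pair decomposition entirely and follow the paper: work in the full $(m^2-1)$-dimensional Bloch picture, let $W$ be the full correlation block of $\tilde{\mathcal S}$, and bound the quadratic form $r_k^\top W r_k$ from below by $w_{\min}|r_k|^2$ with $|r_k|^2$ fixed by purity.
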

This is no longer necessary and sufficient condition for entanglement because there might not be a physical pure state decomposition that achieves the minimum similar to equation (\ref{E}).

Nevertheless, this witness is of some interest as demonstrated by the following example where it detects all the entangled states of a certain family.
Consider three qubits in the mixed state
\begin{equation}
\rho = (1-p) \ket{W}\bra{W} + p \, \rho_{n}, \label{Rank3States}
\end{equation}
with
\begin{eqnarray}
	\ket{W} & = & \tfrac1{\sqrt3}(\ket{100}+\ket{010}+\ket{001}), \\
	\rho_{n} & = & \tfrac13(\ket{100}\bra{100}+\ket{010}\bra{010}+\ket{001}\bra{001}). \nonumber
\end{eqnarray}
It is straightforward to verify that $\rho$ is always of rank-$3$ except for trivial values of $p=0 \text{ or } 1$. 
Our witness reveals that the mixed state is entangled for all $p <1$, see Fig~\ref{FIG_WitnessVsPPT}.

\begin{figure}[t]
\includegraphics[width=90mm]{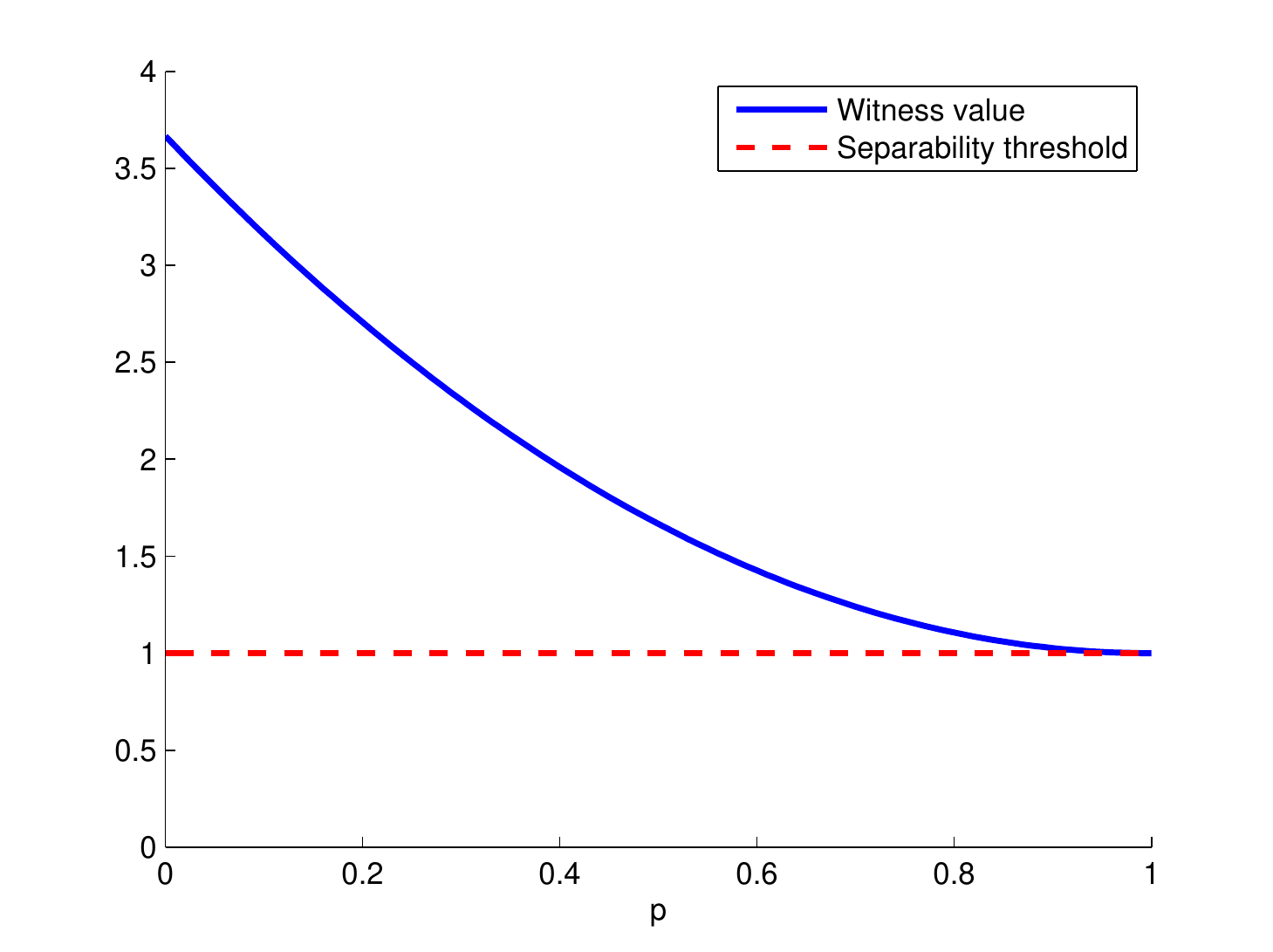}
\caption{\label{FIG_WitnessVsPPT} Witness \eqref{EQ_RANK_M} detects entanglement of all the entangled states of the family given in Eq.~\eqref{Rank3States}.}
\end{figure}


\section{Conclusion}

In conclusion, we showed that a multipartite pure quantum state of any dimensions is entangled if and only if it gives rise to higher squared correlations in random measurements.
Only correlations between all the parties are relevant.
Alternatively, the sum of all squared components of the correlation tensor is higher in all entangled pure quantum states than in product states.
Additionally to various features discussed in the main text, this provides understanding why certain pure entangled states do not violate any two-setting Bell inequalities for correlation functions~\cite{PhysRevA.64.032112,PhysRevLett.88.210401,PhysRevLett.88.210402}.
Conditions for violation of such inequalities involve a plane of correlation tensor defined by the two settings.
There exist states which have bounded correlations in every plane of the correlation tensor, but when all squared correlations are summed up the state is revealed as entangled.

\section{Acknowledgments.}
We warmly thank Micha{\l} Horodecki for stimulating discussions.
This work is supported by the National Research Foundation, Ministry of Education of
Singapore Grant No. RG98/13, start-up grant of the Nanyang Technological University,
NCN Grant No. 2012/05/E/ST2/02352, European Commission Project RAQUEL,
and Austrian Science Fund (FWF) Individual Project 2462.

\appendix


\section{Random correlations and Schmidt decomposition}
\label{APP_SCHMIDT}

\begin{theorem} \label{TH_SCHMIDT}
A pure state of two and three qubits is entangled iff its length of correlations is greater than $1$.
\end{theorem}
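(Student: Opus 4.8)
The plan is to handle $N=2$ and $N=3$ separately, using throughout the local-unitary invariance of $\mathcal{C}$ established in Appendix~\ref{APP_CInv}, which lets us put the state in the Schmidt basis of whichever bipartite cut is convenient.

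For two qubits I would simply write $\ket{\Psi}$ in Schmidt form $\sqrt{\lambda_0}\ket{00}+\sqrt{\lambda_1}\ket{11}$ with $\lambda_0+\lambda_1=1$, form $\rho=\proj{\Psi}$, and evaluate the nine genuine correlations $T_{jk}=\Tr(\rho\,\sigma_j\otimes\sigma_k)$ with $j,k\in\{1,2,3\}$ directly. Only the diagonal ones survive, $T_{zz}=1$ and $T_{xx}=-T_{yy}=2\sqrt{\lambda_0\lambda_1}$, so that $\mathcal{C}=1+8\lambda_0\lambda_1$. This is $>1$ precisely when $\lambda_0\lambda_1>0$, i.e. when the Schmidt rank is two, i.e. when $\ket{\Psi}$ is entangled, and it equals $1$ for product states. (This is consistent with the formula $\mathcal{C}=d^2+1-2d\sum_j p_j^2$ for bipartite pure states.)

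For three qubits the obstruction flagged in the main text shows up: the Schmidt decomposition across the cut $1|23$ leaves the $23$-block possibly entangled, so one cannot bring the whole state to a diagonal form. Instead I would combine global purity with the reduced-state purity equalities that the Schmidt decomposition guarantees across complementary cuts. Expanding $\Tr(\rho^2)=1$ in the correlation tensor gives $1+C_1+C_2+C_3=8$, where $C_k$ is the sum of squared $k$-partite correlations and $C_3=\mathcal{C}$. Then, for each $i$, the Schmidt decomposition across $i\,|\,jk$ gives $\Tr(\rho_{jk}^2)=\Tr(\rho_i^2)$; writing each side through the correlation tensor — $\Tr(\rho_i^2)=\tfrac12(1+|\vec r_i|^2)$ and $\Tr(\rho_{jk}^2)=\tfrac14(1+|\vec r_j|^2+|\vec r_k|^2+S_{jk})$ with $S_{jk}$ the sum of squared two-body correlations within the pair $jk$ — and adding the three resulting identities makes all single-qubit Bloch-length terms cancel, leaving $C_2=3$ for every pure three-qubit state. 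Hence $\mathcal{C}=C_3=4-C_1=4-(|\vec r_1|^2+|\vec r_2|^2+|\vec r_3|^2)$.

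I would finish by noting that $|\vec r_i|^2=2\Tr(\rho_i^2)-1\le 1$ with equality iff $\rho_i$ is pure, so $C_1\le 3$ with equality iff all three one-qubit marginals are pure, which is exactly the case $\ket{\Psi}=\ket{a}\ket{b}\ket{c}$; therefore $\mathcal{C}\ge 1$ with equality only on product states, i.e. $\mathcal{C}>1$ iff $\ket{\Psi}$ is entangled. The only place requiring care is the bookkeeping that yields $C_2=3$: keeping the normalization factors of the partial traces straight and verifying that the one-body terms truly cancel when the three cuts are summed. I would also remark why the argument stops here — for $N\ge 4$ there are intermediate levels $C_2,\dots,C_{N-1}$, and global purity together with the cut-purity relations no longer isolates $C_N$.
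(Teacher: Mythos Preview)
Your proposal is correct and follows essentially the same route as the paper: for $N=3$ your argument is identical (Schmidt-induced equalities $\Tr(\rho_i^2)=\Tr(\rho_{jk}^2)$ summed over the three cuts give $C_2=3$, then global purity yields $\mathcal{C}=4-C_1\ge 1$ with equality iff all marginals are pure). The only cosmetic difference is the $N=2$ case, where you compute the correlation tensor explicitly in the Schmidt basis to get $\mathcal{C}=1+8\lambda_0\lambda_1$, whereas the paper argues via the purity identity $\mathcal{C}+|\vec a|^2+|\vec b|^2=3$ together with $|\vec a|=|\vec b|\le 1$; both are immediate and equivalent.
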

\begin{proof}
We will only use Schmidt decompositions and purity conditions.
For a pure state $\Psi$ of two qubits, the purity condition, $\Tr(\rho^2) = 1$, requires that 
\begin{align}
	\mathcal{C}(\Psi) + |\vec a|^2  + |\vec b|^2 = 3,
\end{align}
where $\vec a, \vec b$ are the local Bloch vectors and $\mathcal{C}(\Psi)$ is the length of correlations (see Eq.~\eqref{DEF_C}).
From Schmidt decomposition one has $|\vec a|^2  = |\vec b|^2\leq 1$.
Thus the length of correlations must satisfy $\mathcal{C}(\Psi)\geq 1$. 
The only case that $\mathcal{C}(\Psi)=1$ is when both $|\vec a|^2  = |\vec b|^2 = 1$, which means $\Psi$ is a product state.

For three qubits in a pure state $\Psi$, let us denote by $\rho_i$ the reduced state of the $i$th subsystem and by $\rho_{ij}$ the reduced state of the $i$th and $j$th subsystems together.
Schmidt decomposition requires that $\Tr(\rho^2_i) = \Tr(\rho^2_{jk})$ for every $i \neq j \neq k$.
In terms of correlations this gives
\begin{align}
	\frac{1}{2}(1+|\vec v_i|^2) = \frac{1}{4}\left(1+|\vec v_j|^2+|\vec v_k|^2+ \mathcal{C}(\rho_{jk})\right), \label{AP_SCHMIDT3}
\end{align}
where e.g. $\vec v_i$ is the Bloch vectors of $\rho_i$ and $\mathcal{C}(\rho_{jk})$ is the length of correlations of the state $\rho_{jk}$.
Note that there are three equations of the form~(\ref{AP_SCHMIDT3}).
After summing them up all the Bloch vectors cancel out and we find
\begin{align}
	\mathcal{C}(\rho_{12})+\mathcal{C}(\rho_{13})+\mathcal{C}(\rho_{23}) = 3. \label{AP_MONO3}
\end{align}
Let us recall the purity condition for $\Psi$:
\begin{align}
	|\vec v_1|^2 + |\vec v_2|^2 &+|\vec v_3|^2 + \mathcal{C}(\Psi) \nonumber \\
	&+ \mathcal{C}(\rho_{12}) +\mathcal{C}(\rho_{13})+\mathcal{C}(\rho_{23}) = 7. \label{AP_PURITY3}
\end{align}
From (\ref{AP_MONO3}) and (\ref{AP_PURITY3}) together with the fact that the length of Bloch vectors is upper-bounded by unity, we have $\mathcal{C}(\Psi) \geq 1$.
In addition, $\mathcal{C}(\Psi)$ is equal to $1$ if and only if all three local Bloch vectors are normalized, i.e. $\Psi$ is a product state. 
\end{proof}


\section{Operator bases}
\label{APP_BASES}
Two most often used operator bases that satisfy Eqs.~(\ref{COND_BASIS1}) and (\ref{COND_BASIS2}) are as follows.

The generalised Gell-Mann matrices (hermitian basis) can be divided into three classes:
\begin{eqnarray}
	G^+_{mn} & = & \sqrt{\tfrac d 2} (\ket{m}\bra{n}+\ket{n}\bra{m}),\\
	G^-_{mn} & = & \sqrt{\tfrac d 2}(-i\ket{m}\bra{n}+i\ket{n}\bra{m}),\\
	\lambda_l & = & \sqrt{\tfrac{d}{(l+1)(l+2)}}\left(\sum_{j=0}^{l} \ket{j}\bra{j}-(l+1)\ket{l+1}\bra{l+1}\right), \nonumber
\end{eqnarray}
where $0\leq m<n\leq d-1$, and $0\leq l \leq d-2$ and $d$ is the dimension of the Hilbert space of pure states. 

The Weyl-Heisenberg matrices (unitary basis) read:
\begin{eqnarray}
W_{mn} & = & X^m Z^n, \qquad m,n=0,\dots, d-1, \\
Z & = & \sum_k \exp(i 2 \pi k /d) \proj{k}, \\
X & = & \sum_k \ket{(k+1) \mod d} \bra{k}. 
\end{eqnarray}


\section{Invariance of the length of correlations}
\label{APP_CInv}

\begin{theorem}
The length of correlations, Eq. (\ref{C_D}), is invariant under the choice of local basis satisfying Eqs.~(\ref{COND_BASIS1}) and (\ref{COND_BASIS2}).
\end{theorem}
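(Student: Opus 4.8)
The plan is to show that any two admissible local operator bases are related, party by party, by a unitary change of basis on the $(d^2-1)$-dimensional space of traceless operators, and that $\mathcal{C}$ in \eqref{C_D} is manifestly unchanged by such unitaries.

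First I would note that the Hilbert--Schmidt inner product $\langle A,B\rangle = \Tr(A^\dag B)$ makes the complex vector space of traceless $d\times d$ matrices a Hilbert space of dimension $d^2-1$. Conditions \eqref{COND_BASIS1} and \eqref{COND_BASIS2} say precisely that $\{\sigma_j/\sqrt d\}_{j=1}^{d^2-1}$ is an orthonormal basis of this space, and the same holds for any other admissible basis $\{\tilde\sigma_j\}$. Since $\sigma_0 = \tilde\sigma_0 = \mathbb{I}$ in both Bloch-type decompositions and $\mathbb{I}$ is orthogonal to every traceless operator, the two bases can differ only through a change of basis among the traceless operators. Because two orthonormal bases of a Hilbert space are related by a unitary, for each party $n$ there is a $(d^2-1)\times(d^2-1)$ unitary $U^{(n)}$ with $\tilde\sigma^{(n)}_j = \sum_k U^{(n)}_{jk}\sigma_k$, hence $\tilde\sigma^{(n)\dag}_j = \sum_k \overline{U^{(n)}_{jk}}\,\sigma^\dag_k$.

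Next I would substitute this into \eqref{DEF_CORRd}. Since the identity component decouples, the correlation tensor in the new basis is
\begin{align}
\tilde T_{j_1\dots j_N} = \sum_{k_1,\dots,k_N=1}^{d^2-1} \overline{U^{(1)}_{j_1 k_1}}\cdots\overline{U^{(N)}_{j_N k_N}}\, T_{k_1\dots k_N}.
\end{align}
Then $\tilde{\mathcal{C}} = \sum_{j_1,\dots,j_N}|\tilde T_{j_1\dots j_N}|^2$ is a contraction of this tensor against its complex conjugate; performing the sum over each $j_n$ first and using unitarity in the form $\sum_{j_n}\overline{U^{(n)}_{j_n k_n}}\,U^{(n)}_{j_n k_n'} = \delta_{k_n k_n'}$ collapses everything to $\sum_{k_1,\dots,k_N}|T_{k_1\dots k_N}|^2 = \mathcal{C}$.

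The only point that needs genuine care is the middle step: arguing that the change of basis is honestly unitary rather than merely invertible. This follows at once once both bases are recognized as orthonormal for the \emph{same} inner product, so the rest is bookkeeping. I would also remark, as the paper does, that this is strictly stronger than invariance under local unitaries, since $\sigma_j\mapsto V\sigma_j V^\dag$ is one instance of an admissible basis change but far from the only one (e.g.\ it cannot relate the Gell-Mann and Weyl-Heisenberg bases).
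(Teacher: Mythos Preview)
Your proof is correct and follows essentially the same approach as the paper's: express the new traceless basis as a unitary linear combination of the old one, substitute into the correlation tensor, and use unitarity to show that the sum of squared moduli is preserved. The paper changes the basis one party at a time and expands the modulus squared explicitly into diagonal and cross terms, whereas you handle all parties simultaneously and frame the unitarity argument more cleanly via the Hilbert--Schmidt inner product, but these are presentational rather than substantive differences.
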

\begin{proof}
Denote by $\mathcal{C}$ the length of correlations calculated in a basis $\{\sigma_{j}\}$ for each of the $N$ qudits.
Without loss of generality, we shall prove that the same value is obtained if the local basis of the first qudit is changed to $\{\sigma'_{j}\}$.
Since only the traceless operators enter the definition of the length of correlations, i.e. $j = 1, \dots, d^2 - 1$, and both bases are complete we have
\begin{align}
	\sigma_j' = \sum_{k=1}^{d^2-1} \alpha_{jk} \sigma_k,
\end{align}
where $\alpha_{jk}$ are complex coefficients forming unitary matrix $\alpha$ because 
\begin{align}
	\sum_{k} \alpha_{jk} \alpha_{lk}^* = \frac{1}{d} \Tr \left( \sigma'_{j} (\sigma'_{l})^\dag \right) = \delta_{jl}.
	\label{EQN_ORTHOCOEFF}
\end{align}
In matrix form, $\alpha\alpha^\dag = \mathbb{I}$.
Denote by $\mathcal{C}'$ the length of correlations evaluated in the $\{\sigma'_\mu\}$ basis. 
Let $\sigma_M$ denote the tensor product of $\sigma_{j_2} \otimes \dots \otimes \sigma_{j_N}$ for the last $N-1$ qudits.
We have
\begin{align}
	\mathcal{C'} &= \sum_{M} \sum_{j} |\Tr(\rho \, \sigma'_j \otimes \sigma_M)|^2 \nonumber\\
	&=\sum_{M}\sum_{j}|\sum_k \alpha_{jk} \Tr(\rho \, \sigma_k \otimes \sigma_M)|^2 \nonumber\\
	&=\sum_{M}\sum_{j}|\sum_k \alpha_{jk} T_{k M}|^2\nonumber\\
 	&=\sum_{M}\sum_{j}\left(\sum_k |\alpha_{jk}|^2 |T_{k M}|^2 + \sum_{k \neq l} \alpha_{jk} \alpha^*_{jl} T_{k M} T_{l M}^* \right)\nonumber\\
	&=\sum_{M}\sum_{k}\left(\sum_j |\alpha_{jk}|^2\right)  |T_{k M}|^2 \nonumber \\
	&\:\:\:\:\:\:\:\:\:\:\:\:\:+ \sum_M \sum_{k \neq l} \left(\sum_j \alpha_{jk}\alpha^*_{jl} \right) T_{k M} T_{l M}^* \nonumber\\
	&=\sum_{M}\sum_{k} |T_{k M}|^2 = \mathcal{C},
\end{align}
where in the second last equality we used \eqref{EQN_ORTHOCOEFF} and $\alpha^\dag\alpha = (\alpha^{-1}\alpha)(\alpha^\dag\alpha)=\alpha^{-1}(\alpha\alpha^\dag)\alpha = \alpha^{-1}\alpha = \mathbb{I}$.
\end{proof}

\section{Length of correlations and entanglement}
\label{APP_TH_1}

Here we prove Th.~\ref{TH_1} of the main text:
\textit{Pure state $\rho = \ket{\Psi}\bra{\Psi}$ is entangled if and only if}
\begin{align}
	\mathcal{C} > (d-1)^N.
\end{align}
\begin{proof}
Clearly, if $\ket{\Psi}$ is a product state, i.e. $\ket{\Psi} = \ket{\Psi_1}\otimes\dots\otimes\ket{\Psi_N}$, then the length of correlations factors and we obtain
\begin{align}
	\mathcal{C}(\Psi) = \prod_{i=1}^N \mathcal{C}(\Psi_i)=\left(d-1\right)^N.
\end{align}
For the proof in the other direction consider two copies of the state.
In general, we can write the length of correlations as $\mathcal{C} = \Tr(\rho \otimes \rho \, \mathcal{S})$, where $\mathcal{S}$ is defined as
\begin{align}
	\mathcal{S} = S^{11'}\otimes \dots\otimes S^{NN'},
\end{align}
with $S^{nn'} = \sum_{j_n=1}^{d^2-1}\sigma_{j_n} \otimes \sigma_{j_n}$, and the superscript denotes pairs of qubits $S$ acts upon.
It can be directly verified that choosing $\sigma$'s as the Gell-Mann operators (without loss of generality as $\mathcal{C}$ is basis independent, see Appendix~\ref{APP_CInv}) $S^{nn'}$ has
\begin{enumerate}
\item $d$ eigenstates $\ket{\alpha_j}=\ket{jj}\:(0\leq j\leq d-1)$ with eigenvalue $d-1$,
\item $\frac{d(d-1)}{2}$ eigenstates $\ket{\beta_{ij}}=\ket{ij}+\ket{ji}\:(0\leq i<j\leq d-1)$ with eigenvalue $d-1$,
\item $\frac{d(d-1)}{2}$ eigenstates $\ket{\gamma_{ij}}=\ket{ij}-\ket{ji}\:(0\leq i<j\leq d-1)$ with eigenvalue $-d-1$.
\end{enumerate}
Since all the eigenvalues of $S$ are $\pm (d-1)$, all the eigenvalues of $\mathcal{S}$ are of the form $(-1)^k (d+1)^k(d-1)^{N-k}$ for $k = 0,1,...,N$.
However, since $-(d+1)$ corresponds to antisymmetric eigenstates, it must occur in pairs, i.e. $k$ must be even.
This follows from the fact that $\mathcal{S}$ is calculated for two identical copies of the state and therefore has no anti-symmetric component.
Thus the smallest eigenvalue of $\mathcal{S}$ is $(d-1)^N$.
It therefore follows that $\mathcal{C}\geq (d-1)^N$.
The equality is when $\ket{\Psi}\otimes\ket{\Psi}$ lies in the space spanned by the eigenstates $\ket{\alpha_j}$ and $\ket{\beta_{ij}}$.
Such state is symmetric with respect to the exchange of \emph{any} qudit $j$ and its copy $j'$.
Let a general expansion of $\ket{\Psi}$ in the standard basis be
\begin{align}
	\ket{\Psi}=\sum_{j_1 \dots j_N} \eta_{j_1 \dots j_N} \ket{j_1 \dots j_N},
\end{align}
where $\eta_{j_1 \dots j_N}$ are the complex coefficients.
We focus on the first two qudits and write $\ket{\Psi}=\sum\eta_{j_1 j_2 |J} \ket{j_1j_2 J}$
where $J$ stands for a sequence $j_3j_4...j_N$ for the last $N-2$ qudits.
The state of the two copies of $\ket{\Psi}$ is
\begin{align}
	\ket{\Psi}\otimes\ket{\Psi}=\sum_{j_1,j_2,J}\sum_{j_1',j_2',J'} \eta_{j_1 j_2|J}\eta_{j_1'j_2'|J'} \ket{j_1 j_2 J}\ket{j_1' j_2' J'}. \label{EQN_PSIBEFORESWAP}
\end{align}
We now exchange the first qudit:
\begin{align}
	\ket{\Psi}\otimes\ket{\Psi}=\sum_{j_1,j_2,J}\sum_{j_1',j_2',J'} \eta_{j_1'j_2|J}\eta_{j_1j_2'|J'} \ket{j_1 j_2 J}\ket{j_1'j_2'J'}. \label{EQN_PSIAFTERSWAP}
\end{align}
Comparing equations \eqref{EQN_PSIBEFORESWAP} and \eqref{EQN_PSIAFTERSWAP}, we obtain relations between the coefficients $\eta$:
\begin{align}
	\eta_{j_1 j_2|J}\eta_{j_1'j_2'|J'}=\eta_{j_1'j_2|J}\eta_{j_1j_2'|J'},
\end{align}
which holds for any $j_1,j_1',j_2,j_2',J,J'$.
In particular, for $J=J',j_1'=1$ we find
\begin{align}
	\frac{\eta_{j_1 j_2|J}}{\eta_{1j_2|J}}=\frac{\eta_{j_1j_2'|J}}{\eta_{1 j_2'|J}} \equiv k_{j_1} \text{   independent of }j_2,j_2'.
\end{align}
Using this relations we can rewrite the state $\ket{\Psi}$ as 
\begin{align}
	\ket{\Psi}&=\sum_{j_1,j_2,J} \eta_{j_1j_2|J} \ket{j_1 j_2 J}\\
	&=\sum_{j_2,J} \left( \sum_{j_1} \eta_{j_1 j_2|J} \ket{j_1}\right)\ket{j_2J}\\
	&=\sum_{j_2,J} \left( \sum_{j_1} k_{j_1} \eta_{1j_2|J} \ket{j_1}\right)\ket{j_2J}\\
	&=\left(\sum_{j_1} k_{j_1} \ket{j_1}\right) \otimes \left(\sum_{j_2,J}\eta_{1,j_2|J}\ket{j_2J}\right).
\end{align}
Thus $\ket{\Psi}$ is a tensor product of a pure state for the first qudit and another pure state for the last $N-1$ qudits.
By applying this argument iteratively we find that $\ket{\Psi}$ is fully separable. 
\end{proof}


\section{Random correlations do not depend on the initial operator in averaging}
\label{APP_INITIAL}

We present two lemmas before moving to the main theorem.
\begin{lemma}
\label{LEM_INITIAL_RHO}
For traceless and trace-orthogonal operators $\sigma_1$ and $\sigma_2$, and arbitrary state $\rho$ of two qudits
\begin{eqnarray}
	A & \equiv & \int dU \, \Tr(U \otimes U. \rho\: . U^\dag \otimes U^\dag. \sigma_1^\dag \otimes \sigma_2) \\
	& = & 0.
\end{eqnarray}
\end{lemma}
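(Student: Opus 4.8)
The plan is to compute the Haar-average using Schur-type averaging over the twofold tensor power of $U$. The key observation is that the integral $\int dU\, U^{\otimes 2}\, X\, (U^\dagger)^{\otimes 2}$ projects the operator $X$ (here $X = \rho$) onto the commutant of $U^{\otimes 2}$ acting on $\mathcal{H}_d \otimes \mathcal{H}_d$, which by Schur–Weyl duality is spanned by the identity $\mathbb{I}$ and the swap operator $F$ (assuming $d \geq 2$, where these are the only irreps appearing). So $A$ reduces to a linear combination of two traces: $\Tr\!\big[(\sigma_1^\dagger \otimes \sigma_2)\,\mathbb{I}\big]$ and $\Tr\!\big[(\sigma_1^\dagger \otimes \sigma_2)\,F\big]$, with coefficients depending only on $\Tr(\rho)=1$ and $\Tr(\rho F)$.

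\textbf{First I would} write the Haar average explicitly. Using the standard formula for the twirl over $U(d)$ on two copies,
\begin{align}
\int dU\, U^{\otimes 2}\, \rho\, (U^\dagger)^{\otimes 2} = a\,\mathbb{I} + b\,F,
\end{align}
where $a,b$ are fixed by the two equations $\Tr(\cdot)$ and $\Tr(F\,\cdot)$ on both sides: since $\Tr(\mathbb{I})=d^2$, $\Tr(F)=d$, $\Tr(F^2)=d^2$, $\Tr(\rho)=1$, one gets a $2\times 2$ linear system yielding $a = \frac{d\,\Tr(\rho) - \Tr(\rho F)}{d(d^2-1)}$ and $b = \frac{d\,\Tr(\rho F) - \Tr(\rho)}{d(d^2-1)}$. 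Substituting back,
\begin{align}
A = a\,\Tr\!\big[\sigma_1^\dagger \otimes \sigma_2\big] + b\,\Tr\!\big[(\sigma_1^\dagger \otimes \sigma_2)\,F\big].
\end{align}

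\textbf{Then I would} evaluate the two operator traces using tracelessness and trace-orthogonality of $\sigma_1,\sigma_2$. The first term factorizes: $\Tr[\sigma_1^\dagger \otimes \sigma_2] = \Tr(\sigma_1^\dagger)\,\Tr(\sigma_2) = 0$ since both operators are traceless, killing the $a$-term entirely. For the second term, the swap identity $\Tr[(P \otimes Q)F] = \Tr(PQ)$ gives $\Tr[(\sigma_1^\dagger \otimes \sigma_2)F] = \Tr(\sigma_1^\dagger \sigma_2)$, which vanishes by trace-orthogonality (condition \eqref{COND_BASIS2} with $j \neq k$, i.e. $\Tr(\sigma_1 \sigma_2^\dagger)=0$, and since $\sigma_1$ traceless-Hermitian-conjugate relations give $\Tr(\sigma_1^\dagger\sigma_2)=0$ as well — more carefully, the basis operators in \eqref{COND_BASIS2} satisfy $\Tr(\sigma_j\sigma_k^\dagger)=d\delta_{jk}$, and one checks that the relevant combination vanishes). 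Hence both contributions are zero and $A = 0$.

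\textbf{The main obstacle} I anticipate is purely bookkeeping rather than conceptual: being careful with daggers when the basis is non-Hermitian (e.g. Weyl–Heisenberg), so that the trace-orthogonality condition \eqref{COND_BASIS2} is applied in the correct form, and making sure the swap-trace identity is used with the right placement of $\sigma_1^\dagger$ versus $\sigma_1$. One should also confirm that the commutant really is just $\mathrm{span}\{\mathbb{I}, F\}$ for all $d \geq 2$ — this is standard ($U(d)$ acting on $\mathcal{H}_d^{\otimes 2}$ has commutant generated by $S_2$), so no dimension-specific subtlety arises. Everything else is a short linear-algebra computation.
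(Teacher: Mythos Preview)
Your proposal is correct and follows essentially the same approach as the paper: the paper brings the integral inside the trace and identifies the result as a Werner state $\rho_W = \tfrac{1}{d^2-d\alpha}(\mathbb{I}-\alpha P)$, which is exactly your Schur--Weyl statement that the twirl lies in $\mathrm{span}\{\mathbb{I},F\}$; it then uses tracelessness to kill the identity term and the swap identity $\Tr(P\,\sigma_1^\dagger\otimes\sigma_2)=\Tr(\sigma_1^\dagger\sigma_2)=0$ to kill the other. Your version is slightly more explicit about the coefficients $a,b$ and about the dagger placement, but the argument is the same.
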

\begin{proof}
By bringing the integral inside the trace one recognizes the Werner state
\begin{equation}
\rho_W = \int dU \, U \otimes U \rho\: U^\dag \otimes U^\dag.
\end{equation}
All such states can be written in the form~\cite{Werner89}:
\begin{align}
	\rho_W = \frac{1}{d^2-d\alpha}(\mathbb{I}-\alpha P),
\end{align}
where $\alpha \in [-1,1]$ and $P$ is the swap operator 
\begin{align}
	P = \sum_{i,j} \ket{ij}\bra{ji}.
\end{align}
Since $\sigma_1^\dag \otimes\sigma_2$ is traceless, we have
\begin{align}
	A = \frac{\alpha}{d^2 - \alpha d} \Tr(P.\sigma_1^\dag \otimes \sigma_2).
\end{align}
It can be directly verified that
\begin{align}
\Tr(P.\sigma_1^\dag \otimes \sigma_2) = \Tr(\sigma_1^\dag \sigma_2).
\end{align}
The lemma follows from orthogonality of $\sigma$'s.
\end{proof}

\begin{lemma}
\label{LEM_INITIAL}
For traceless and trace-orthogonal operators $\sigma_1$ and $\sigma_2$
\begin{eqnarray}
	B & \equiv & \int dU \, \Tr(U^\dag\otimes U^\dag. \sigma_1^\dag \otimes \sigma_2 . U\otimes U) \\
	& = & 0.
\end{eqnarray}
\end{lemma}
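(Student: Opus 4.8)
The plan is to note that, unlike in Lemma~\ref{LEM_INITIAL_RHO}, no operator is inserted between the conjugating unitaries in $B$, so the $U\otimes U$-conjugation of $\sigma_1^\dag\otimes\sigma_2$ acts factor by factor and the Haar average never needs to be performed explicitly. Concretely, using $(A\otimes B)(C\otimes D)=(AC)\otimes(BD)$ one rewrites the integrand as
\begin{align}
	\Tr\big[(U^\dag\otimes U^\dag)(\sigma_1^\dag\otimes\sigma_2)(U\otimes U)\big]=\Tr\big[(U^\dag\sigma_1^\dag U)\otimes(U^\dag\sigma_2 U)\big],
\end{align}
and then $\Tr(X\otimes Y)=\Tr X\,\Tr Y$ factorizes this into $\Tr(U^\dag\sigma_1^\dag U)\,\Tr(U^\dag\sigma_2 U)$.

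The remaining step is cyclicity of the single-system trace, which gives $\Tr(U^\dag\sigma_1^\dag U)=\Tr(\sigma_1^\dag)$ and $\Tr(U^\dag\sigma_2 U)=\Tr(\sigma_2)$. Both vanish by Eq.~(\ref{COND_BASIS1}) since the $\sigma$'s are traceless, so the integrand equals $0$ for \emph{every} $U$, and hence $B=0$ irrespective of how the measure is normalized.

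I do not expect a genuine obstacle here: the whole content is the observation that $\sigma_1^\dag\otimes\sigma_2$ is already a product operator, so the two-system twirl decouples into two independent single-system objects each of which is traceless. This is precisely why the lemma is stated separately from, and proved more briefly than, Lemma~\ref{LEM_INITIAL_RHO}, where the presence of an arbitrary bipartite $\rho$ genuinely requires the Werner-state decomposition; in fact the trace-orthogonality hypothesis, essential there, is not even used in the present argument. Together with Lemma~\ref{LEM_INITIAL_RHO}, this result serves to eliminate the off-diagonal contributions that arise when an arbitrary traceless normalized initial operator is expanded in the chosen operator basis, leaving only a diagonal term that is manifestly independent of that choice.
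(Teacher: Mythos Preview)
Your argument is correct for the statement \emph{as literally typeset}, but that statement contains a typo: the $\Tr$ should not be there. The paper's own proof makes this unambiguous---it begins ``We shall prove that $B$ has all matrix elements $\langle mn| B |kl\rangle = 0$'', which only makes sense if $B$ is the \emph{operator}
\[
B=\int dU\, (U^\dag\otimes U^\dag)\,(\sigma_1^\dag\otimes\sigma_2)\,(U\otimes U),
\]
not its trace. The subsequent application in the theorem of Appendix~\ref{APP_INITIAL} confirms this: the lemma is invoked to kill the $k\neq j$ terms in the bracketed operator $\int dU_1\, U_1^\dag\otimes U_1^\dag\,\sigma_k^\dag\otimes\sigma_j\,U_1\otimes U_1$, which is then sandwiched between $\rho\otimes\rho$ and further operators before any trace is taken. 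The mere vanishing of $\Tr B$ would be useless there.

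You in fact flagged the problem yourself: you noticed that trace-orthogonality is never used in your argument, and that the lemma looks ``more brief'' than Lemma~\ref{LEM_INITIAL_RHO}. Both observations are symptoms of having proved the wrong (trivial) statement. For the operator version, the $U\otimes U$-twirl does \emph{not} decouple factorwise---the result is a Werner-type operator supported on both the symmetric and antisymmetric subspaces---and orthogonality of $\sigma_1$ and $\sigma_2$ is exactly what forces the swap component to vanish. The paper obtains this by reducing each matrix element $\langle mn|B|kl\rangle$ to an instance of Lemma~\ref{LEM_INITIAL_RHO} with $\rho$ chosen as $|kl\rangle\langle kl|$, $\tfrac12(|kl\rangle+|mn\rangle)(\langle kl|+\langle mn|)$, and $\tfrac12(|kl\rangle+i|mn\rangle)(\langle kl|-i\langle mn|)$.
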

\begin{proof}
We shall prove that $B$ has all matrix elements $\langle mn | B | kl \rangle =0$.
First write:
\begin{equation}
\langle mn | B | kl \rangle = \int dU \, \Tr(|kl \rangle \langle mn| . U^\dag\otimes U^\dag. \sigma_1^\dag \otimes \sigma_2 . U\otimes U).
\end{equation}
The diagonal elements, i.e. $k = m$ and $l = n$, vanish due to Lemma~\ref{LEM_INITIAL_RHO}, because $|kl \rangle \langle kl|$ is a valid density matrix.
For off-diagonal elements, i.e. $k \ne m$ or $l \ne n$, apply Lemma~\ref{LEM_INITIAL_RHO} to states $\rho_1 = \frac{1}{2}(\ket{kl} + \ket{mn})(\bra{kl} + \bra{mn})$ and $\rho_2 = \frac{1}{2}(\ket{kl} + i \ket{mn})(\bra{kl} -i \bra{mn})$
to obtain respectively:
\begin{eqnarray}
0 & = & \langle mn | B | kl \rangle + \langle kl | B | mn \rangle, \\
0 & = & \langle mn | B | kl \rangle - \langle kl | B | mn \rangle.
\end{eqnarray}
Sum and difference reveals that all off-diagonal elements vanish.
\end{proof}

\begin{theorem}
Random correlations
\begin{equation}
\mathcal{R}(\vec \lambda) = \frac{1}{W^N} \int d U_1 \dots \int d U_N \, \,  |\Tr(\rho \, \bigotimes_{n=1}^N U_n^\dagger \lambda_n^\dag U_n)|^2
\end{equation}
do not depend on the choice of operators $\lambda_n$ such that $\Tr(\lambda_n) = 0$ and $\Tr(\lambda_n^2) = d$.
\end{theorem}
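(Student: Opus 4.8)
The plan is to linearize the modulus squared by passing to two copies of the state, which converts $\mathcal R(\vec\lambda)$ into a single trace and allows the Haar averages to be performed one site at a time. Writing $E(U_1,\dots,U_N)=\Tr(\rho\,X)$ with $X=\bigotimes_n U_n^\dagger\lambda_n^\dagger U_n$ and using hermiticity of $\rho$ to get $\overline{\Tr(\rho X)}=\Tr(\rho X^\dagger)$, one has $|E|^2=\Tr\big[(\rho\otimes\rho)\,(X\otimes X^\dagger)\big]$ on the doubled space. Regrouping the two copies of site $n$ next to each other turns $X\otimes X^\dagger$ into $\bigotimes_n\big(U_n^\dagger\lambda_n^\dagger U_n\otimes U_n^\dagger\lambda_n U_n\big)$, so after integrating
\[
\mathcal R(\vec\lambda)=\Tr\Big[(\rho\otimes\rho)\,\bigotimes_{n=1}^{N}M(\lambda_n)\Big],\qquad
M(\lambda)\equiv\frac{1}{W}\int dU\,(U\otimes U)^\dagger(\lambda^\dagger\otimes\lambda)(U\otimes U),
\]
where the $n$-th pair of copies is paired with $M(\lambda_n)$. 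It therefore suffices to show that the single-site twirl $M(\lambda)$ is the \emph{same} operator for every traceless, normalized $\lambda$.

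Next I would identify $M(\lambda)$ explicitly. By invariance of the Haar measure it commutes with $V\otimes V$ for every unitary $V$, hence lies in the two-dimensional commutant spanned by the identity $\mathbb I_{d^2}$ and the swap $P$, so $M(\lambda)=a(\lambda)\,\mathbb I+b(\lambda)\,P$. (If one prefers to avoid citing Schur--Weyl duality, the same is obtained elementarily by expanding $\lambda$ in a fixed reference basis $\{\sigma_j\}$: Lemma~\ref{LEM_INITIAL} kills every off-diagonal twirl $\int dU\,(U\otimes U)^\dagger(\sigma_j^\dagger\otimes\sigma_k)(U\otimes U)$ with $j\ne k$, and the diagonal ones are pinned to a common value by re-expanding in a ``flat'' basis related to $\{\sigma_j\}$ by a complex-Hadamard change of coefficients as in Appendix~\ref{APP_CInv}.) The two scalars follow from two traces: tracelessness gives $\Tr M(\lambda)=\Tr(\lambda^\dagger)\Tr(\lambda)=0$, i.e. $a\,d^2+b\,d=0$; and using that $P$ commutes with $U\otimes U$ together with the swap identity $\Tr\big(P(A\otimes B)\big)=\Tr(AB)$ already used in the proof of Lemma~\ref{LEM_INITIAL_RHO}, $\Tr\big(P\,M(\lambda)\big)=\Tr(\lambda^\dagger\lambda)=d$, i.e. $a\,d+b\,d^2=d$. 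Solving, $a=-1/(d^2-1)$ and $b=d/(d^2-1)$, so $M(\lambda)=\tfrac{1}{d^2-1}(dP-\mathbb I)$ for every admissible $\lambda$. Substituting back shows $\mathcal R(\vec\lambda)$ has no dependence on the $\lambda_n$, and as a consistency check it reproduces $\mathcal R=\mathcal C/(d^2-1)^N$ once one recognizes $dP-\mathbb I=\sum_{j\ge 1}\sigma_j\otimes\sigma_j^\dagger$, matching the operator $S^{nn'}$ of Appendix~\ref{APP_TH_1}.

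There is no deep obstacle here; the care needed is bookkeeping. First, the tensor-factor reordering of $\rho\otimes\rho$ must be tracked consistently so that each $M(\lambda_n)$ acts on the correct pair of copies. Second, the normalization that actually enters $\Tr\big(P\,M(\lambda)\big)$ is the Hilbert--Schmidt one, $\Tr(\lambda\lambda^\dagger)=d$, which is precisely condition~\eqref{COND_BASIS2} and coincides with the stated $\Tr(\lambda^2)=d$ whenever $\lambda$ is hermitian (Gell-Mann) or unitary (Weyl--Heisenberg); this is the subtlety the two preliminary lemmas are really tailored to, Lemma~\ref{LEM_INITIAL_RHO} supplying the Werner-twirl overlap and Lemma~\ref{LEM_INITIAL} upgrading it to the vanishing of all the cross terms, which is exactly what makes the ``commutant $=\langle\mathbb I,P\rangle$'' step self-contained.
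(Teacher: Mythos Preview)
Your proof is correct and shares the same opening move as the paper---linearizing $|E|^2$ on two copies and factorizing the Haar average into per-site twirls---but then diverges. The paper's own argument does not compute $M(\lambda)$ at all: instead it passes to the Weyl--Heisenberg basis, notes that every $\sigma_j$ is a \emph{unitary} conjugate of $\sigma_1$ (hence $\mathcal R(\sigma_1)=\cdots=\mathcal R(\sigma_{d^2-1})$ by Haar invariance), expands $\lambda=\sum_j\gamma_j\sigma_j$, kills the cross terms $j\ne k$ via Lemma~\ref{LEM_INITIAL}, and finishes with $\mathcal R(\lambda)=\sum_j|\gamma_j|^2\,\mathcal R(\sigma_j)=\mathcal R(\sigma_1)$. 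Your route instead identifies $M(\lambda)$ as an element of the commutant $\mathrm{span}\{\mathbb I,P\}$ and pins down the coefficients from $\Tr M(\lambda)=0$ and $\Tr(P\,M(\lambda))=\Tr(\lambda^\dagger\lambda)$, obtaining $M(\lambda)=(dP-\mathbb I)/(d^2-1)$ independently of $\lambda$. What your approach buys is an explicit closed form for the twirl, which immediately yields Theorem~2 ($\mathcal R=\mathcal C/(d^2-1)^N$) and the link to $S^{nn'}$ as a by-product; what the paper's approach buys is that it is entirely self-contained from the two preceding lemmas without invoking Schur--Weyl (the ``flat--Hadamard'' alternative you sketch for the diagonal terms is replaced by the simpler observation that the Weyl--Heisenberg elements are already unitarily equivalent to one another). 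Your remark on the normalization is on point: the quantity that actually enters $\Tr(P\,M(\lambda))$ is $\Tr(\lambda\lambda^\dagger)$, i.e.\ condition~\eqref{COND_BASIS2}, which agrees with the stated $\Tr(\lambda^2)=d$ in the hermitian case but is the right hypothesis in general.
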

\begin{proof}
Without loss of generality let us focus on the first subsystem and denote by $J$ the sequence of the last $N-1$ qudits.
We therefore write
\begin{equation}
\mathcal{R}(\sigma_1) \equiv \frac{1}{W^N} \int d U_J \int d U_1 \, \,  |\Tr(\rho \, U_1^\dagger \otimes U_J^\dag . \sigma_1^\dag \otimes \sigma_J^\dag . U_1 \otimes U_J)|^2.
\end{equation}
The expression involving the trace can be linearised using the second copy of the state $\rho$ as follows:
\begin{eqnarray}
&& |\Tr(\rho \, U_1^\dagger \otimes U_J^\dag . \sigma_1^\dag \otimes \sigma_J^\dag . U_1 \otimes U_J)|^2 \\
&=&
\Tr(\rho \otimes \rho . U_1^\dagger \otimes U_J^\dag \otimes U_1^\dagger \otimes U_J^\dag . \\ 
&& \sigma_1^\dag \otimes \sigma_J^\dag \otimes \sigma_1 \otimes \sigma_J . U_1 \otimes U_J \otimes U_1 \otimes U_J).
\end{eqnarray}
Without loss of generality (see Appendix~\ref{APP_CInv}) we will now consider the Weyl-Heisenberg basis.
Since all the operators within this basis are related by a unitary we have:
\begin{equation}
\mathcal{R}(\sigma_1) = \mathcal{R}(\sigma_2) = \dots = \mathcal{R}(\sigma_{d^2 - 1}).
\label{EQ_REQ}
\end{equation}
Take now operator $\lambda$ from the thesis and decompose it in this basis:
\begin{equation}
\lambda = \sum_{j = 1}^{d^2 - 1} \gamma_j \sigma_j, \qquad \textrm{with} \qquad \sum_{j=1}^{d^2-1} |\gamma_j|^2 =1,
\end{equation}
where the first equation follows from $\Tr(\lambda) = 0$, and the second from $\Tr(\lambda^2) =d$.
The random correlations calculated with $\lambda$ as the initial operator satisfy:
\begin{widetext}
\begin{eqnarray}
	\mathcal{R}(\lambda) & = & \sum_{j,k = 1}^{d^2 - 1} \gamma_j \gamma^*_k  \int dU_J \int dU_1 
	\Tr\left(\rho \otimes \rho .U_1^\dagger \otimes U_J^\dag \otimes U_1^\dagger \otimes U_J^\dag . \sigma_k^\dag \otimes \sigma_J^\dag \otimes \sigma_j \otimes \sigma_J . U_1 \otimes U_J \otimes U_1 \otimes U_J \right) \\
	& = & \sum_{j,k} \gamma_j \gamma^*_k \int dU_J \Tr \left[ \rho \otimes \rho . \left(\int dU_1 U_1^\dag \otimes U_1^\dag. \sigma_k^\dag \otimes \sigma_j .U_1 \otimes U_1 \right)
	\otimes U_J^\dag \otimes U_J^\dag . \sigma_J^\dag \otimes \sigma_J. U_J \otimes U_J \right] \\
	& = & \sum_j |\gamma_j|^2  \int dU_J \Tr \left[ \rho \otimes \rho. \left(\int dU_1 U_1^\dag \otimes U_1^\dag .\sigma_j^\dag \otimes \sigma_j . U_1 \otimes U_1 \right)
	\otimes U_J^\dag \otimes U_J^\dag. \sigma_J^\dag \otimes \sigma_J. U_J \otimes U_J \right]\\
	& = & \sum_j |\gamma_j|^2  \mathcal{R}(\sigma_j) = \mathcal{R}(\sigma_1),
\end{eqnarray}
\end{widetext}
where in the second line we isolated the first particle from the principal system and the first particle from the copy,
in the third line we use Lemma~\ref{LEM_INITIAL} and in the last line Eq.~(\ref{EQ_REQ}).
\end{proof}


\section{Theorem about the convex roof extension}
\label{APP_RANK2}

Here we prove Th.~\ref{TH_RANK2} of the main text:
\textit{For a multipartite mixed state of rank two}
\begin{equation}
	\mathcal{E}(\rho)=\mathcal{C}(\rho)+\frac{1}{2} \left(1-\Tr(\rho^2) \right) w_{\min},
\end{equation}
\textit{where $\mathcal{C}(\rho)$ is given in Eq.~(\ref{DEF_C}) and $w_{\min}$ is the lowest eigenvalue of $3 \times 3$ matrix defined in the proof below.}

\begin{proof}
By the definition of rank-two states $\rho$ can be written as a mixture of $\ket{\tilde{0}}$ and $\ket{\tilde{1}}$, which are two $N$-qubit pure states. 
Without loss of generality assume that they are mutually orthogonal.
The length of correlations of a pure state $\Psi$ can be written as expectation value of an operator $\mathcal{S}$ in the two-copy state $\ket{\Psi}\ket{\Psi}$ (see Appendix \ref{APP_TH_1}).
Therefore, we can write $\tilde{\mathcal{E}}$ for a particular decomposition $\{\mu_k,\Psi_k\}$ of $\rho$ as
\begin{align}
	\tilde{\mathcal{E}}& = \sum_k \mu_k \Tr( \mathcal{S} \, \Pi_k), \label{MIX_ENT1}
\end{align}
where we denote $\Pi_k = \proj{\Psi_k} \otimes \proj{\Psi_k}$.
Since all the pure states $\Psi_k$ are within the subspace spanned by $\ket{\tilde{0}}, \ket{\tilde{1}}$, only the projection of $\mathcal{S}$ onto this subspace will contribute to the trace in (\ref{MIX_ENT1}).
Let us therefore introduce $4 \times 4$ matrix $\tilde{\mathcal{S}}$ with matrix elements $\langle \tilde i \tilde j | \mathcal{S} | \tilde m \tilde n \rangle$, where $i,j,m,n = 0,1$.
Similarly, by introducing $4 \times 4$ matrix $\tilde{\Pi}_k$ with elements $\langle \tilde i \tilde j | \Pi_k | \tilde m \tilde n \rangle$, we can rewrite (\ref{MIX_ENT1}) as
\begin{equation}
\tilde{\mathcal{E}} = \sum_k \mu_k \Tr( \tilde{\mathcal{S}} \, \tilde \Pi_k), \label{TILDAS}
\end{equation}
We now represent operators with the tilde in terms of Pauli matrices operating in the support of $\rho$:
\begin{align}
\tilde{\Pi}_k &= \frac{1}{2} (\mathbb{I} + \vec r_k \cdot \vec \sigma) \otimes \frac{1}{2}(\mathbb{I} + \vec r_k \cdot \vec \sigma), \label{EQN_BLOCH1}\\
\tilde{\mathcal{S}}  &=\frac{1}{4}\bigg(s_0\mathbb{I}\otimes\mathbb{I}+\vec{s} \cdot \vec{\sigma}\otimes \mathbb{I}
	+\mathbb{I}\otimes\vec{s} \cdot \vec{\sigma}\nonumber\\
&\:\:\:\:\:\:\:\:\:\:\:\:\:\:\:\:\:\:\:\:\:\:\:\:\:\:\:\:\:\:\:\:\:\:\:\:\:\:\:\:\:\:+\sum_{i=1}^3 w_i\sigma_i\otimes\sigma_i\bigg), \label{EQN_BLOCH2}
\end{align}
where $s_0 = \Tr(\tilde{\mathcal{S}})$ and $\vec{\sigma}=(\sigma_1,\sigma_2,\sigma_3)$ is a vector of standard Pauli matrices chosen in such a way that  $\tilde{\mathcal{S}}$ is diagonal with real entries ordered as $w_1\geq w_2 \geq w_3$.
Note that the ``local'' parts described by vector $\vec s$ are the same since $\tilde{\mathcal{S}}$ is symmetric with respect to the exchange of the two copies. 
In this representation equation (\ref{TILDAS}) takes the form
\begin{align}
	\tilde{\mathcal{E}} = \frac{1}{4}\bigg[s_0+2 \vec{s} \cdot \vec{\rho}+\sum_k \mu_k (w_1 x_k^2+w_2 y_k^2 + w_3 z_k^2)\bigg], \label{TILDAS2}
\end{align}
where $\vec{\rho} = (\rho_x,\rho_y,\rho_z)$ is the Bloch vector representing the state $\rho$ in the subspace spanned by $\ket{\tilde{0}}$ and $\ket{\tilde{1}}$, and we have introduced components of vectors $\vec r_k = (x_k,y_k,z_k)$.
Since $\ket{\Psi_k}$ are pure states, purity condition implies $z_k^2=1-x_k^2-y_k^2$ for all $k$ in the decomposition. Equation (\ref{TILDAS2}) becomes 
\begin{align}
	\tilde{\mathcal{E}}=\frac{1}{4}\Bigg[s_0+2\vec{s} \cdot \vec{\rho}&+w_3+(w_1-w_3)\bigg(\sum_k \mu_k x_k^2\bigg)\nonumber\\
	&+(w_2-w_3)\bigg(\sum_k \mu_k y_k^2\bigg)\Bigg].
\end{align}
Note that the sums on the right hand side are quadratic and thus convex. Therefore $\sum_k \mu_k x_k^2\geq (\sum_k \mu_k x_k)^2 =  \rho_x^2$ and similarly $\sum_k \mu_k y_k^2\geq  \rho_y^2$. Both inequalities can be saturated if $x_k=\rho_x$ and $y_k=\rho_y$ for all $k$. We therefore have solved the minimization problem of the earlier convex-roof construction
\begin{align}
	\mathcal{E}(\rho)=\min_{\{\mu_k,\Psi_k\}}\tilde{\mathcal{E}} &= \frac{1}{4} [s_0+2 \vec{s} \cdot \vec{\rho}+w_3\nonumber\\
	&+(w_1-w_3)\rho_x^2+ (w_2-w_3)\rho_y^2]. \label{E}
\end{align}
Taking into account that 
\begin{align}
	\mathcal{C}(\rho)& = \frac{1}{4}[s_0+2 \vec{s} \cdot \vec{\rho}+w_1 \rho_x^2+ w_2 \rho_y^2 + w_3 \rho_z^2], \\
	\Tr(\rho^2)&=\frac{1}{2}(1+\rho_x^2 + \rho_y^2 + \rho_z^2),
\end{align}
we may simplify (\ref{E}) to
\begin{align}
	\mathcal{E}(\rho)=\mathcal{C}(\rho)+\frac{1}{2} \left(1- \Tr(\rho^2) \right) w_{\min},
\end{align}
and the theorem follows. 
\end{proof}

\bibliographystyle{apsrev4-1}
\bibliography{random_corr}

\end{document}